\newif\ifarxiv
\def\BibTeX{{\rm B\kern-.05em{\sc i\kern-.025em b}\kern-.08em
    T\kern-.1667em\lower.7ex\hbox{E}\kern-.125emX}}
\algnewcommand\algorithmicparfor{\textbf{parfor}}
\algnewcommand\algorithmicpardo{\textbf{do}}
\newtheorem{theorem}{Theorem}
\newtheorem{problem}{Problem}
\newtheorem{constraint}{Constraint}
\newtheorem{definition}{Definition}
\newcommand{\cuq}{cuQuantum\xspace}
\newcommand{\hyq}{HyQuas\xspace}
\newcommand{\qis}{Qiskit\xspace}
\newcommand{\snq}{SnuQS\xspace}
\newcommand{\qdao}{QDAO\xspace}
\newcommand{\uniq}{UniQ\xspace}
\newcommand{\Sys}{Atlas\xspace}
\newcommand{\sys}{\Sys}
\newcommand{\Syss}{\Sys'\xspace} 
\newcommand{\kr}[2]{{#1} \downarrow {#2}}
\newcommand{\partit}{$\Call{Partition}{}$\xspace}
\newcommand{\stage}{$\Call{Stage}{}$\xspace}
\newcommand{\execute}{$\Call{Execute}{}$\xspace}
\newcommand{\simulate}{$\Call{Simulate}{}$\xspace}
\newcommand{\dpordered}{$\Call{OrderedKernelize}{}$\xspace}
\newcommand{\dporderedplain}{\textsc{OrderedKernelize}\xspace}
\newcommand{\kernelize}{$\Call{Kernelize}{}$\xspace}
\newcommand{\kernelizeplain}{\textsc{Kernelize}\xspace}
\newcommand{\extq}{ExtQ}
\newcommand{\kwins}{insular\xspace}
\newcommand{\kwnonins}{non-insular\xspace}
\newcommand{\Kwins}{Insular\xspace}
\newcommand{\m}{\mathcal}
\newcommand{\ZJ}[1]{\textcolor{blue}{ZJ: #1}}
\newcommand{\MX}[1]{\textcolor{red}{MX: #1}}
\newcommand{\UA}[1]{\textcolor{red}{UA: #1}}
\newcommand{\defn}[1]{\emph{#1}}
\newcommand{\KS}[1]{\ensuremath{\kappa}}
\newcommand{\tcd}{\texttt}
\begin{document}

\ifarxiv
\title{
\Sys: Hierarchical Partitioning for Quantum Circuit Simulation on GPUs (Extended Version)
}
\else
\title{
  \Sys: Hierarchical Partitioning for Quantum Circuit Simulation on GPUs
}
\fi

\if 0
\author{Mingkuan Xu}
\affiliation{\institution{Carnegie Mellon University}
\city{Pittsburgh}
\state{PA}
\country{USA}}
\email{mingkuan@cmu.edu}

\author{Shiyi Cao}
\affiliation{\institution{UC Berkeley}
\city{Berkeley}
\state{CA}
\country{USA}}
\email{shicao@berkeley.edu}

\author{Xupeng Miao}
\affiliation{\institution{Carnegie Mellon University}
\city{Pittsburgh}
\state{PA}
\country{USA}}
\email{xupeng@cmu.edu}

\author{Umut A. Acar}
\affiliation{\institution{Carnegie Mellon University}
\city{Pittsburgh}
\state{PA}
\country{USA}}
\email{umut@cmu.edu}

\author{Zhihao Jia}
\affiliation{\institution{Carnegie Mellon University}
\city{Pittsburgh}
\state{PA}
\country{USA}}
\email{zhihao@cmu.edu}
\fi 

\makeatletter
\newcommand{\linebreakand}{%
  \end{@IEEEauthorhalign}
  \hfill\mbox{}\par
  \mbox{}\hfill\begin{@IEEEauthorhalign}
}
\makeatother

\author{
\IEEEauthorblockN{Mingkuan Xu}
\IEEEauthorblockA{\textit{Computer Science Department} \\
\textit{Carnegie Mellon University}\\
Pittsburgh, PA, USA \\
mingkuan@cmu.edu}
\and
\IEEEauthorblockN{Shiyi Cao}
\IEEEauthorblockA{\textit{Department of EECS} \\
\textit{UC Berkeley}\\
Berkeley, CA, USA \\
shicao@berkeley.edu}
\and
\IEEEauthorblockN{Xupeng Miao}
\IEEEauthorblockA{\textit{Computer Science Department} \\
\textit{Carnegie Mellon University}\\
Pittsburgh, PA, USA \\
xupeng@cmu.edu}
\linebreakand
\IEEEauthorblockN{Umut A. Acar}
\IEEEauthorblockA{\textit{Computer Science Department} \\
\textit{Carnegie Mellon University}\\
Pittsburgh, PA, USA \\
umut@cmu.edu}
\and
\IEEEauthorblockN{Zhihao Jia}
\IEEEauthorblockA{\textit{Computer Science Department} \\
\textit{Carnegie Mellon University}\\
Pittsburgh, PA, USA \\
zhihao@cmu.edu}
}



\maketitle
\ifarxiv
\thispagestyle{fancy}
\lhead{}
\rhead{}
\chead{}
\lfoot{\footnotesize{This is the extended version of a paper presented in SC24~\cite{xu2024atlas}.
\newline This version includes an additional appendix with detailed results.}}
\rfoot{}
\cfoot{}
\renewcommand{\headrulewidth}{0pt}
\renewcommand{\footrulewidth}{0pt}
\else
\thispagestyle{fancy}
\lhead{}
\rhead{}
\chead{}
\lfoot{\footnotesize{SC24, November 17-22, 2024, Atlanta, Georgia, USA
\newline 979-8-3503-5291-7/24/\$31.00 \copyright 2024 IEEE}}
\rfoot{}
\cfoot{}
\renewcommand{\headrulewidth}{0pt}
\renewcommand{\footrulewidth}{0pt}
\fi
\begin{abstract}
  %
  %
  %
  %

  This paper presents techniques for theoretically and practically
  efficient and scalable Schr\"odinger-style quantum circuit simulation. 
  Our approach partitions a quantum circuit into a hierarchy of
  subcircuits and simulates the subcircuits on multi-node GPUs, exploiting available data parallelism while minimizing
  communication costs.
  To minimize communication costs, we formulate an Integer Linear
  Program that rewards simulation of ``nearby'' gates on ``nearby'' GPUs.
  To maximize throughput, we use a dynamic programming
  algorithm to compute the subcircuit simulated by each kernel at a GPU.
  %
  %
  %
  We realize these techniques in \sys, a distributed, multi-GPU quantum circuit simulator.
  %
  %
  Our evaluation on a variety of quantum circuits shows that \Sys outperforms state-of-the-art GPU-based simulators by more than 2$\times$ on average and is able to run larger circuits via offloading to DRAM, outperforming other large-circuit simulators by two orders of magnitude.

\if0
  This paper presents a new approach to fast and scalable quantum circuit simulation by transforming a quantum circuit to a DNN computation graph and performing distributed simulation on existing ML systems.
To eliminate fine-grained communications when simulating quantum circuits, we introduce a novel technique, called qubit remapping, that adjusts how different qubits of a quantum circuit map to different dimensions of a tensor.
We formalize qubit remapping as a constrained optimization problem and propose an integer linear programming algorithm to discover a strategy that minimizes inter-node communications.
To optimize simulation performance on GPUs, we further formalize the circuit kernelization problem of partitioning a quantum circuit into GPU kernels as an optimization problem and introduce a dynamic programming algorithm to solve it.
We realize these techniques and build \Sys, a quantum circuit simulator on top of an ML system, which outperforms the best existing GPU-based simulators by 2.46$\times$ on average (6.55$\times$ at most) and can scale to much larger quantum circuits.
\fi

\if 0
We present novel techniques for accelerating and scaling quantum circuit simulation on GPU clusters.  To
  ensure performance and scalability, we partition a quantum circuit
  into subcircuits at multiple scales, each of which matches the
  different scales of the hardware at the level of clusters, nodes,
  and GPU devices.  We then simulate each subcircuit on its matching
  hardware resource in such a way that the subcircuit simulation runs
  in isolation without communicating with other resources at its level.
  The idea behind our approach is exploit the plethora of parallelism
  available in quantum simulation by minimizing overall communication
  cost and by packing simulation of each subcircuit tightly into a
  compute resource.
  This approach was inspired by advances in machine learning systems
  that take advantage of multiple forms of parallelism to scale
  machine learning computations.  We observe that quantum simulation
  would benefit from the same techniques and bridge these two fields
  to scale quantum simulation to larger input sizes and GPU hardware.
\fi   
\end{abstract}

\if 0
\begin{abstract}
  Quantum algorithms have proved increasingly important in solving
  problems that remain intractable on classical computers.  Yet,
  quantum hardware is in its infancy and suffers from both problems of
  scale and errors due to the fundamental difficulty of isolating
  quantum processes from interacting with the environment.  Quantum
  simulators, therefore, are broadly considered to be essential to
  advancing to both using quantum algorithms and developing new
  quantum algorithms.

  In this paper, we present novel techniques for scaling quantum
  simulations on GPU clusters.  The key idea behind our techniques is
  to partition a large quantum circuit into stages, each of which
  operates on a (small) subset of the input qubits and schedule the
  stages across the available GPUs for efficient parallel
  execution. To execute each stage, we use both data and pipeline
  parallelism: using data parallelism we distribute the computation
  across all GPUs, which may reside on multiple nodes, and using
  pipeline parallelism, we coarsen each stage into kernels that
  execute in a pipelined fashion.

  Our approach to quantum simulation was inspired by advances in
  machine learning systems that take advantage of multiple forms of
  parallelism to scale machine learning computations.  We observe
  that quantum simulation would benefit from the same techniques and
  bridge these two fields to scale quantum simulation to larger input
  sizes and GPU hardware.
  
\end{abstract}
\fi


\begin{IEEEkeywords}
Parallel programming, quantum simulation.
\end{IEEEkeywords}

\section{Introduction}
\label{sec:intro}


Quantum computing has established an advantage over classical
computing, especially in areas such as cryptography, machine learning,
and physical
sciences~\cite{nielsen2000quantum,bennett2020quantum,orus2019quantum,biamonte2017quantum,lloyd2013quantum,schuld2019quantum,aspuru2005simulated,babbush2015chemical}.
Several quantum computers have been built recently, including Sycamore~\cite{arute2019quantum}, Bristlecone~\cite{bristlecone}, Jiuzhang~\cite{zhong2020quantum},
Osprey~\cite{Choi_2023},
and Condor~\cite{Castelvecchi_2023}.
However, the robustness demands of many quantum applications exceed those
of modern {\em noisy intermediate-scale quantum} (NISQ) computers, which
suffer from decoherence and lack of error-correction~\cite{Preskill_2018, Corcoles_2020}.
Furthermore, NISQ computers are expensive resources---many remain inaccessible beyond a small group.
Therefore, there is significant interest in quantum circuit simulation,
which enables performing robust quantum computation on classical parallel
machine.

Several approaches to simulating quantum circuits have been proposed,
including Schr\"odinger- and Feynman-style
simulation.
Even though Feynman-style simulation can require a small amount of
space, its time requirement appears very high. Therefore, most modern
simulators use Schr\"odinger-style quantum circuit simulation, which maintains an entire quantum state in a \defn{state vector} of size $2^n$ for $n$ qubits, and
applies each gate of the circuit to it.

%
%

%
To tackle the scalability challenges of state-vector simulation, researchers exploit the plethora of
parallelism available in this task by using CPUs, GPUs, and parallel
machines~\cite{amariutei2011parallel, avila2014gpu,
avila2016optimizing, qiskit2019, gutierrez2010quantum, jones2019quest,
zhang2015quantum,smelyanskiy2016qhipster,suzuki2021qulacs,park_snuqs_2022,chen2021hyquas}.
To cope with the increasing demand for simulating larger 
circuits, recent work has proposed techniques for storing state
vectors on DRAM and secondary storage (e.g.,
disks)~\cite{haner_05_2017,smelyanskiy2016qhipster}.
Even after over a decade of research, quantum circuit simulation
continues to remain challenging for performance and scalability.

At a high level of abstraction, there are three key challenges to
(Schr\"odinger style) state-vector simulation.
The first challenge is the space requirements of storing the state
vector with $2^n$ amplitudes, each of which is a complex
number.
Overcoming this challenge requires distributing the state vector in a parallel machine
with heterogeneous memory.
{Second, simulating the application of a quantum gate to one or multiple qubits involves strided accesses to the state vector, which requires {\em fine-grained} inter-node and/or inter-device communications. The cost of these communications becomes the performance bottleneck for large-scale circuit simulation.}
\if 0
The second challenge stems from a direct consequence of the large
memory required to store the state vector and the ensuing cost of
communication.
State-vector simulation is processed by applying quantum gates to the state vector, each of which updates most if not all amplitudes of the state vector.
The resulting communication costs from such accesses can become
significant, especially if the state
vector is distributed across multiple compute nodes for large-scale simulation.
\fi 
{Third, because quantum gates typically operate on a small number of qubits, simulating a quantum gate involves multiplying a sparse matrix with the state vector,  resulting in low arithmetic intensity, e.g., as little as 0.5 on real-world quantum circuits~\cite{haner_05_2017}. This harms performance on modern accelerators (e.g., GPUs) designed for compute-intensive workload.}
\if 0
The third challenge stems from the difficulties of exploiting the
parallelism.
Given a state vector and a quantum gate, we can apply the gate to the
state vector by considering each and every element of the state vector
in parallel.
Because the state vector is large, this approach generates a plethora
of data parallelism, but the resulting parallelism is very
fine-grained because applying a gate typically performs a small amount
of computation.
\fi

In this paper, we propose techniques for 
performant and scalable quantum circuit simulation on modern GPU
systems.
Specifically, we consider a multi-node GPU architecture, where each
node may host a number of GPUs.
We then perform on this architecture quantum circuit simulation in the
(Schr\"odinger) state-vector style by using the available memory across
all nodes to store the state.
To minimize the communication costs, we partition the quantum circuit
into a number of \defn{stages}, each of which consists of a
(contiguous) subcircuit of the input circuit that can be simulated on
a single GPU without requiring non-local accesses outside the memory
of the GPU.
To maximize the benefits from parallelism, we further partition the
subcircuit of each stage into \defn{kernels} or groups of gates that
are large enough to benefit from parallelism but also small enough to
prevent exponential blowups in cost.

Given such a partitioned circuit consisting of stages and kernels, we
present a simulation algorithm that performs the simulation in stages
and that restricts much of the expensive communication to take place
only between the stages.
To maximize throughput and parallelism within each stage, the
algorithm \defn{shards} the state vector into
contiguous pieces such that each shard may be used to simulate the
subcircuit of the stage on a single GPU.
The approach allows multiple shards (of a stage) to be executed in
parallel or sequentially depending on the availability of resources.

We formulate the partitioning problem consisting of staging and kernelization and
present provably effective algorithms for solving them.
For the staging problem, we provide a solution based on Integer Linear
Programming, which can be solved by an off-the-shelf solver.
For the kernelization problem, we present a dynamic programming
solution that can ensure optimality under some assumptions.

We present an implementation, called \Sys, that realizes the proposed
approach.
%
%
Our evaluation on a variety of quantum circuits on up to 256 GPUs (on
64 nodes) shows that \Sys is up to $20.2\times$ faster than
state-of-the-art GPU-based simulators and can scale to large quantum
circuits that go beyond GPU memory capacity (up to $105\times$ faster than state-of-the-art simulators that also go beyond GPU memory capacity). 

In summary, this paper makes the following contributions.

\begin{itemize}
\item A hierarchical partitioning approach to scaling
  performant quantum circuit simulation based on
  staging and kernelization.

\item An ILP algorithm to stage a circuit for simulation that can
  minimize the number of stages.

\item A dynamic programming algorithm for kernelizing each stage to
  ensure efficient parallelism.
      
\item An implementation that realizes hierarchical partitioning and significantly outperforms existing simulators.
\end{itemize}

\section{Background}
\label{sec:background}


\textbf{Architectural Model.}
We assume a multi-node GPU architecture with $2^G$ nodes.
Each node contains multiple GPUs and a
single CPU with an attached DRAM module.
Each GPU can store in its \defn{local} memory $2^L$ amplitudes
(complex numbers).
Each node can store in its \defn{regional} memory $2^{L+R}$
amplitudes, where $2^R$ can be the number of GPUs per node or can be
set such that $2^{L+R}$ equals the number of amplitudes that can be
stored in the DRAM memory of the node.

\textbf{Quantum Information Fundamentals.}
A quantum circuit consists of a number of qubits (quantum bits) and gates that operate on them.
%
%
The state $\ket{\psi}$ of a $n$-qubit  
quantum circuit is
a superposition of its \defn{basis states} denoted $\ket{0}$,
$\ket{1}$, ..., $\ket{2^n-1}$ and is typically written as
$
\ket{\psi} = \sum_{i=0}^{2^n-1} \alpha_i \ket{i},
$
where $\alpha_i$ is a complex coefficient (also called amplitude) of
the basis state $\ket{i}$.
When measuring the state of the system, the probability of observing the
state $\ket{i}$ as the output is $|\alpha_i|^2$; therefore,
$\sum_{i=0}^{2^n-1} |\alpha_i|^2 = 1$.
When simulating a quantum circuit with~$n$ qubits, we can represent
its state with a vector of $2^n$ complex values $\vec{\alpha} =
(\alpha_0,\alpha_1,...,\alpha_{2^n-1})^\top$.


The semantics of a $k$-qubit gate is specified by a $2^k \times 2^k$
unitary complex matrix $U$ and applying the gate to a quantum circuit
with state $\ket{\psi}$ results in a new state: $\ket{\psi}
\rightarrow U \ket{\psi}$.  For example, applying a 1-qubit gate on
the $q$-th qubit of a quantum system updates its state vector as
follows:

\begin{equation}
\begin{bmatrix}
\alpha_{f(i)}\\
\alpha_{f(i)+2^q}
\end{bmatrix}
\rightarrow
\begin{bmatrix}
u_{00} & u_{01} \\
u_{10} & u_{11}
\end{bmatrix}
\times
\begin{bmatrix}
\alpha_{f(i)}\\
\alpha_{f(i)+2^q}
\end{bmatrix},
\label{eqn2}
\end{equation}
where $\begin{bmatrix}
u_{00} & u_{01} \\
u_{10} & u_{11}
\end{bmatrix}$
is the $2\times 2$ unitary complex matrix that represents the 1-qubit gate, and $f(i) =  2^{q+1} \lfloor\frac{i}{2^q}\rfloor + (i \bmod 2^q)$ for all integers between 0 and $2^{n-1}-1$.

\if 0
A special class of multiple-qubit gates is {\em controlled}-$U$ gates, whose qubits can be divided into {\em control} and {\em target} qubits.
If one or more of the control qubits are $\ket{0}$, then the target
remains unchanged.
If all of control qubits are $\ket{1}$, then the gate $U$ is applied to
the target qubits.
For example, the unitary matrix for the 2-qubit controlled-$U$ gate
(i.e., one control and one target qubit) is as follows, where
$\theta$, $\gamma$, $\lambda$, $\phi$ are parameters of the gate:


\begin{equation}
CU(\theta, \phi, \lambda, \gamma) = 
\begin{bmatrix}
1 & 0 & 0 & 0 \\
0 & 1 & 0 & 0 \\
0 & 0 & e^{i\gamma} \cos{\theta} & -e^{i(\gamma + \lambda)}\sin{\theta} \\
0 & 0 & e^{i(\gamma + \lambda)}\sin{\theta} & e^{i(\gamma + \lambda + \phi)}\cos{\theta} 
\end{bmatrix}.
\end{equation}

Instead of calculating the new state vector by multiplying with the
4$\times$4 matrix, we can also apply the single-qubit gate
$U=\begin{bmatrix} e^{i\gamma} \cos{\theta} & -e^{i(\gamma +
  \lambda)}\sin{\theta} \\ e^{i(\gamma + \lambda)}\sin{\theta} &
e^{i(\gamma + \lambda + \phi)}\cos{\theta}
\end{bmatrix}$
to the target qubit for states where the control qubits are
$\ket{1}$.
This optimization allows us to consider half of the states and perform
$2\times2$ matrix multiplications (instead of applying 4$\times$4
matrix multiplications on all the states).
We generalize this optimization to a more
general form of {\em \kwins qubits} (\Cref{def:insular_qubit}).
\fi
\if 0
\UA{This seems unnecessary}
Many of \Syss techniques are inspired by and built on top of recent advances of systems for deep neural networks (DNNs).
A DNN is generally represented as a {\em computation graph}, where each node is a mathematical tensor operator such as matrix multiplication and convolution, and each edge is a tensor (i.e., an $n$-dimensional array) shared between operators.
DNNs are computationally intensive to train due to the large training datasets, thus a standard approach to reducing the training time is to parallelize DNN computation across multiple GPUs.
The most common parallelization strategy is {\em data parallelism}~\cite{Tensorflow}, which places a replica of a DNN model on each device, so that each device processes a subset of training samples and synchronizes with other devices at the end of each training iteration to update parameters. Data parallelism does not require any communication in the forward processing and backward propagation of an iteration and only performs synchronization between iterations.
Recent work has proposed more advanced parallelization strategies for DNN training, such as model and pipeline parallelism~\cite{zheng22-alpa, jia2019flexflow, pipedream-sosp19, rasley2020deepspeed}.
As we will discuss in \Cref{sec:overview}, \Sys transforms quantum circuits into small DNN models, for which data parallelism achieves the best performance. Therefore, we focus on data parallelism in this paper.

Data parallelism requires maintaining a replica of the entire DNN model on each device and therefore does not work for large DNN models that go beyond the memory capacity of a single device.
To train such large models, another common parallelization strategy is {\em pipeline model parallelism}, which partitions a DNN model into multiple stages with data dependencies.
Each stage is assigned to a dedicated group of devices to perform the corresponding forward band backward computation, with different groups communicating model activations and gradients.
Due to the data dependency between stages, the forward and backward computation of different stages for a single batch of samples process in parallel.
To address this issue, pipeline model parallelism partitions a mini-batch of training samples into multiple {\em micro-batches}, allowing different devices to perform assigned computation for different micro-batches in a pipeline fashion.
\fi

\begin{figure*}[ht!]
    \centering
    \includegraphics[scale=0.48]{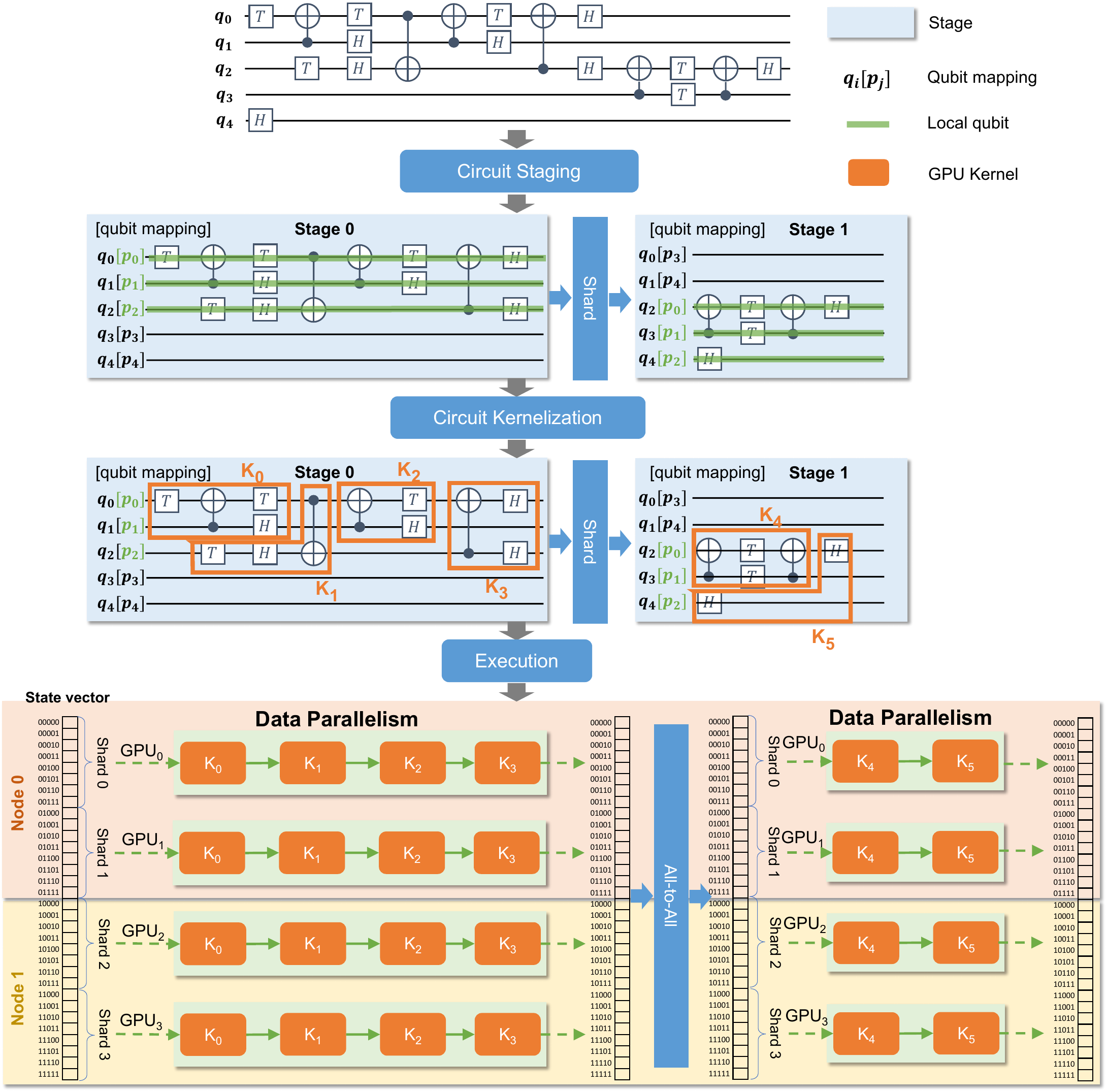}
    \caption{An example application of circuit partitioning and execution. We stage the circuit so that qubits of each gate map to local qubits (i.e., green lines in each stage).  The notation $q_i[p_j]$ indicates that the $i$-th logical qubit maps to the $j$-th physical qubit. The \kernelizeplain algorithm then partitions the gates of each stage into kernels that provide for data parallelism.
    }
    \label{fig:atlas}
\end{figure*}

\section{Hierarchical Partitioning for Simulation}
\label{sec:atlas}

We present our quantum simulation algorithm with hierarchical circuit
partitioning.


\begin{algorithm}
\begin{algorithmic}[1]
        
\Function{Partition}{$\m{C}, L, R, G$}
\State $stages = \Call{Stage}{\m{C}, L, R, G}$
\State $stagedKernels$ = []
\For {$i = 0 \dots |stages|-1$}
\State $(\m{C}_i, \m{Q}) = stages[i]$
\State $kernels_i = \Call{Kernelize}{\m{C}_i}$
\State $stagedKernels[i] = (kernels_i, \m{Q})$
\EndFor
\State \Return $stagedKernels$
\EndFunction

\Function{Execute}{$stagedKernels, state$}    
\State $\m{P}  = $ identity permutation
\State $shards = [state]$
\For {$i = 0 \dots |stagedKernels|-1$}
\State $(kernels_i, \m{Q}) = stagedKernels[i]$
\State $(shards, \m{P}) = \Call{Shard}{shards, \m{Q}, \m{P}}$
\ParFor {$shard$ \textbf{in} $shards$}
\For {$k$ \textbf{in} $kernels_i$}
\State $\Call{LaunchKernel}{k, shard}$
\EndFor
\EndParFor
\EndFor
\EndFunction

\Function{Simulate}{$\m{C}, state, L, R, G$}    
\State $stagedKernels = \Call{Partition}{\m{C}, L, R, G}$
\State $\Call{Execute}{stagedKernels, state}$
\EndFunction

\end{algorithmic}
\caption{Hierarchical partitioning of a quantum circuit $\m{C}$ and
  the simulation algorithm for simulating the circuit starting with
  the input state $state$. The parameters $L, R, G$ describe the distributed execution model with $L$ local qubits, $R$ regional qubits, and $G$ global qubits.}
\label{alg:atlas}
\end{algorithm}

\Cref{alg:atlas} shows the pseudocode for our algorithms for quantum
circuit simulation with hierarchical partitioning.
The algorithm \partit takes as arguments the input circuit
$\m{C}$ and the architecture parameters consisting of the number of
local, regional, and global qubits, $L, R, G$ respectively.
It partitions the input circuit $\m{C}$ into stages using the
\stage function.
Each \defn{stage} consists of a subcircuit and a partition of the
logical qubits of the (entire) circuit into sets of local, regional,
and global qubits, such that the subcircuit can be simulated entirely
locally by a single GPU, performing local memory accesses only.
We describe the staging algorithm \stage in \Cref{sec:staging}.

After staging the circuit, the algorithm \partit, proceeds
to kernelize each circuit by calling  \kernelize
on the subcircuit of each stage.
The algorithm \kernelize further partitions the subcircuit for
the stage into subcircuits each of which may be efficiently executed
on a single GPU by taking advantage of data parallelism.
The basic idea behind the algorithm is to group gates to amortize the
cost of data parallelism on modern GPUs (e.g., due to cost of
launching kernels) and do so without leading to an exponential blow-up of
costs, because grouping gates can increase the cost exponentially in the
number of qubits involved.
We describe the kernelization algorithm \kernelize in \Cref{sec:kernelizer}.
The algorithm \partit completes by returning a list of
stages, each of which consists of a list of kernels.

The algorithm \execute takes a list of kernelized stages and
input state consisting of a vector of amplitudes, and executes each
stage in order.
For each stage, the algorithm permutes the state vector and shards it
into contiguous sections as demanded by the qubit partition $\m{Q}$
for that stage.
To minimize communication costs during this permutation and sharding
process, the algorithm keeps track of the current permutation $\m{P}$.
The algorithm then considers each shard in parallel and applies the
kernels of the stage to the shard sequentially.
Because our algorithm staged and kernelized the circuit to maximize
throughput for a distributed GPU architecture, each kernel can run
efficiently on a single GPU, by applying each kernel to all the
amplitudes in a shard of the state vector in a data parallel fashion.
If sufficiently many GPUs are available, then each shard may be
assigned to a GPU for parallel application.
If, however, fewer GPUs are available, the shards may be stored in the
shared DRAM at each node, and swapped in and out of the GPUs for
execution.

Our simulation algorithm builds on top of our \partit and
\execute algorithms.
Given the input circuit $\m{C}$ and an input state $state$, along with
architecture parameters $L, R, G$, our simulation algorithm
\simulate starts by partitioning the input circuit into a list of
kernelized stages.
It then calls the algorithm \execute to execute the stages.

\Cref{fig:atlas} shows an example application of the algorithm. We note that \partit does not depend on $state$. Both \stage and \kernelize work for arbitrary input states.

\section{Circuit Staging}
\label{sec:staging}

We present an algorithm for (circuit) staging that partitions the
circuit into \defn{stages}, each of which is a contiguous subcircuit
of the input circuit, along with a partition of the qubits into local,
regional, and global sets such that each gate in the subcircuit of the
stage operates on the local qubits.

\begin{definition}[Local, regional, and global qubits]
\label{def:logical_regional_global}
For a quantum circuit simulation with $n$ qubits, let $q_i$ ($0\leq i \leq n-1)$ be the $i$-th physical qubit. Let each shard include $2^L$ states, and let the DRAM of each node save $2^{L+R}$ states.

\begin{itemize}

\item The first $L$ physical qubits (i.e., $q_0,...,q_{L-1}$) are {\em local} qubits.
\item The next $R$ physical qubits (i.e., $q_{L},...,q_{L+R-1}$) are {\em regional} qubits. 
\item The final $G = n - L - R$ physical qubits (i.e., $q_{L+R},...,q_{n-1}$) are {\em global} qubits.
\end{itemize}

\end{definition}

We categorize physical qubits into local, regional, and global subsets based on the communications required to simulate the application of a general quantum gate to these qubits.
First, applying a gate to a local qubit only requires accessing states within the same shard, and therefore avoids communications.
Second, applying a gate to a regional qubit requires accessing states in different shards stored on the same compute node, which only requires inter-device (intra-node) communications.
Finally, applying a gate to a global qubit requires states stored on different compute nodes and thus involves inter-node commutations.

Given this partition of local, regional, and global qubits, we aim to map logical qubits of a circuit to physical qubits 
in a way that avoids excessive communication.
To minimize communication cost, we leverage a specific type of qubits in certain types of gates, called {\em insular qubits}.
%

\begin{definition}[\Kwins{} Qubit]
\label{def:insular_qubit}
For a single-qubit gate, the qubit is \kwins if the
unitary matrix of the gate is diagonal or anti-diagonal, i.e., the
non-zero entries are along the (anti) diagonal.
For multi-qubit controlled-$U$ gates\footnote{A {\em controlled}-$U$ gate has some {\em control} qubits controlling a $U$ gate (can be any unitary gate) on some {\em target} qubits. 
If at least one of the control qubits is $\ket{0}$, then the target
remains unchanged.
If all of the control qubits are $\ket{1}$, then the gate $U$ is applied to
the target qubits.}, all control qubits are \kwins\footnote{In some controlled-$U$ gates, any qubit can be chosen as the control qubit without changing the output. In such gates, all qubits are \kwins{}.}.
%
All other qubits are \kwnonins{}.
\end{definition}

\begin{figure}[t]
    \centering
    \subfloat[Sharding with inter-node\\ communication.]{
    \includegraphics[scale=0.33]{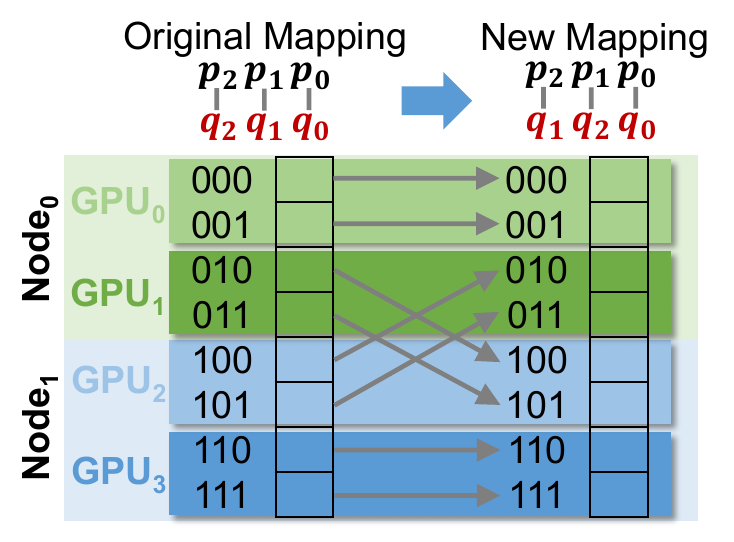}
    \label{fig:inter_node_communications}
    }
    \subfloat[Sharding with intra-node\\(inter-GPU) communication.]{
    \includegraphics[scale=0.33]{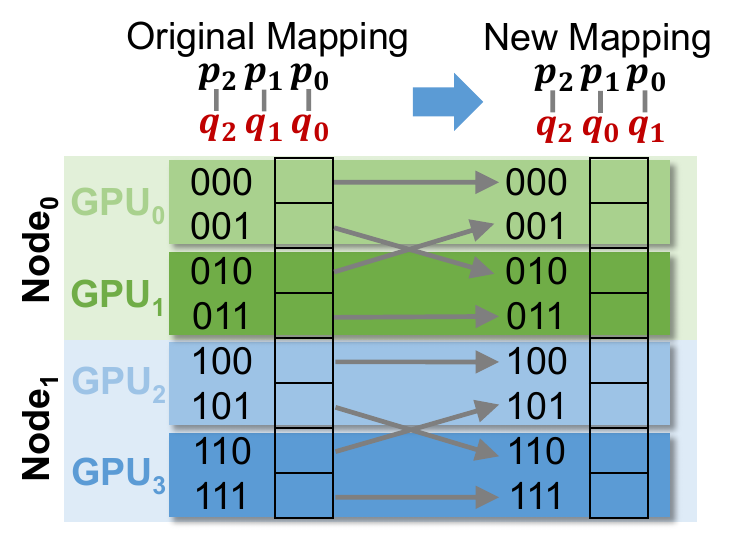}
    \label{fig:inter_device_communications}
    }
    \caption{Sharding with different types of communication. The simulation has 1 local, 1 regional, and 1 global qubit. $p_i - q_j$ indicates that the $i$-th physical qubit is mapped to the $j$-th logical qubit. Inter-node communication is triggered if we update any global qubits (\Cref{fig:inter_node_communications}), and only intra-node communication is triggered otherwise (\Cref{fig:inter_device_communications}).}
    \label{fig:qubit_remapping}
\end{figure}

Our idea of \kwins{} qubits was inspired by {\em global gate specialization} introduced by H\"aner \emph{et~al.}~\cite{haner_05_2017}.
Intuitively, for a single-qubit gate with an \kwins{} qubit, computing each output state only depends on one input state (since the gate's unitary matrix is diagonal or anti-diagonal), which allows \Sys to map \kwins{} qubits to regional and/or global physical qubits without introducing any extra communication. This property allows \Sys to only consider the \kwnonins{} qubits of quantum gates when mapping qubits.

\Sys staging algorithm (\Cref{alg:ilp}) splits a quantum circuit's simulation into multiple stages, each of which includes a subcircuit and uses a different mapping from logical to physical qubits.
Within each subcircuit, all \kwnonins{} qubits of all gates can only operate on \emph{local} physical qubits. 
This approach avoids any inter-device or inter-node communications within a stage and only requires all-to-all communications between stages to perform qubit remapping.



We formalize circuit staging as a constrained optimization problem and design a binary integer linear programming (ILP) algorithm to discover a staging strategy that minimizes the total communication cost.



\paragraph{Circuit staging problem}
Given an inter-node communication cost factor $c$, for a given input circuit $\m{C}$ with $n$ qubits and an integer $s$, \Sys circuit staging algorithm splits $\m{C}$ into at most $s$ stages $\m{C}_0,..., \m{C}_{s-1}$ and determines a qubit partition $\m{Q}$ of local/regional/global qubits for each stage, minimizing the total communication cost:
\begin{equation}
\sum_{i=1}^{s-1} \left(\left|\m{Q}_i^{local} \setminus \m{Q}_{i-1}^{local}\right| + c \cdot \left|\m{Q}_i^{global} \setminus \m{Q}_{i-1}^{global}\right|\right).
\label{eqn:ilp_objective_informal}
\end{equation}
where $\m{Q}_i^{local}$ is the local qubit set of stage $i$ and $\m{Q}_i^{global}$ is the global qubit set of stage $i$. $\left|\m{Q}_i^{local} \setminus \m{Q}_{i-1}^{local}\right|$ is the number of local qubits that need to be updated, approximating the inter-GPU communication cost; $\left|\m{Q}_i^{global} \setminus \m{Q}_{i-1}^{global}\right|$ is the number of global qubits that need to be updated, approximating the extra inter-node communication cost.
Although regional qubits do not appear directly in \eqref{eqn:ilp_objective_informal}, they are critical for allowing the number of global qubits that need to be updated to be smaller than the local one. Without regional qubits, any inter-GPU communication will also be inter-node.


Let $\m{G}$ denote the gates of the circuit $\m{C}$, and let $\m{E}$ denote their dependencies (adjacent gate pairs on the same qubit).
Let $A_{q,k}$ and $B_{q,k}$ ($0\leq q < n$, $0 \leq k < s$) denote whether the $q$-th logical qubit is mapped to a local/global physical qubit at the $k$-th stage, and let $F_{g,k}$ ($g \in \m{G}$, $0 \leq k < s$) denote if the gate $g$ is finished by the end of the $k$-th stage. 
For each $A_{q,k}$ and $B_{q,k}$ ($k < s-1$), we also introduce two ancillary variables $S_{q,k}$ and $T_{q,k}$ that indicate whether the $q$-th logical qubit is updated from local to non-local or from global to non-global from the $k$-th to the $(k+1)$-th stage (i.e., $S_{q,k} =1$ if and only if $A_{q,k} = 0$ and $A_{q,k+1} = 1$; $T_{q,k} =1$ if and only if $B_{q,k} = 0$ and $B_{q,k+1} = 1$).
Note that $A_{q,k}$, $B_{q,k}$, $F_{g,k}$, $S_{q,k}$, and $T_{q,k}$ are all binary variables.
Minimizing the total cost of local-to-non-local and global-to-non-global updates with constraints yields the following objective:

\begin{equation}
    \min \sum_{k=0}^{s-2} \sum_{q=0}^{n-1} \left(S_{q,k} + c \cdot T_{q,k}\right)\label{eqn:objective}
\end{equation}
subject to
\begin{align}
&A_{q,k+1}\leq A_{q,k} + S_{q,k} & \forall q \in [0,n)\ \ \forall k \in [0,s-1) \label{c1}\\
&B_{q,k+1}\leq B_{q,k} + T_{q,k} & \forall q \in [0,n)\ \ \forall k \in [0,s-1) \label{cdeft} \\
&F_{g,k}\leq F_{g,k+1} & \forall g \in \m{G}\ \ \forall k \in [0,s-1) \label{c2}\\
&F_{g, k}\leq F_{g,k-1} + A_{q,k} & \textrm{$q$ is a non-insular qubit of $g$}\label{c3}\\ 
&F_{g_1,k}\geq F_{g_2,k} & \forall (g_1, g_2) \in \m{E}\ \ \forall k \in [0,s)\label{c4}\\
&F_{g,s-1}=1 & \forall g \in \m{G} \label{c5} \\
&A_{q,k} + B_{q,k} \le 1 & \forall q \in [0,n)\ \ \forall k \in [0,s) \label{cag} \\
&\sum_{q=0}^{n-1}{A_{q,k}}=L ~~ \sum_{q=0}^{n-1}{B_{q,k}}=G \hspace{-8ex}& \forall k \in [0,s) \label{c6} 
\end{align}


We now describe these constraints. First, constraints~\ref{c1}, \ref{cdeft} and \ref{c2} can be derived from the definitions of $S_{q,k}$, $T_{q,k}$ and $F_{g,k}$. Second, constraint~\ref{c3} is a locality constraint that a gate $g$ can be executed at the $k$-th stage only if $g$'s non-insular qubits $q$ are all mapped to local physical qubits (i.e., $A_{q,k} = 1$).
Third, constraint~\ref{c4} represents a dependency constraint that all gates must be executed by following a topological order with respect to dependencies between them.
Fourth, constraint~\ref{c5} is derived from the completion constraint that all gates must be executed in the $s$ stages.
Next, constraint~\ref{cag} declares that any qubit cannot be both local and global at the same time.
Finally, constraint~\ref{c6} specifies the hardware constraint that each stage has $L$ local physical qubits and $G$ global qubits.

\paragraph{ILP-based circuit staging.} For a given input circuit $\m{C}$ and an integer $s$ that specifies the maximum number of stages $\m{C}$ can be partitioned into, \Sys sends the objective (i.e., \Cref{eqn:objective}) and all constraints (i.e., \Cref{c1,cdeft,c2,c3,c4,c5,cag,c6}) to an off-the-shelf integer-linear-programming (ILP) solver, which returns an assignment for matrices $A,B,F,S,T$ that minimizes the objective while satisfying all constraints. 
\Sys then stages the circuit $\m{C}$ into $(\m{C}_0,\m{Q}_0),\dots,(\m{C}_{s-1},\m{Q}_{s-1})$ based on $A$, $B$ and $F$, where $\m{C}_i$ is a subcircuit and $\m{Q}_i$ is a qubit partition into local/regional/global qubit sets.
Specifically, for a quantum gate $g \in \m{G}$, $g$ will be executed by the stage with index: $\min\{k | F_{g,k} = 1\}$. 
\Sys maps the $q$-th logical qubit to a local (physical) qubit at the $k$-th stage if $A_{q,k} = 1$, to a global qubit at the $k$-th stage if $B_{q,k} = 1$, and to a regional qubit if $A_{q,k} = B_{q,k} = 0$.



\Cref{alg:ilp} shows \Sys staging algorithm. It uses the ILP solver as a subroutine and returns the first feasible result.

\begin{algorithm}[tb]
\small
\begin{algorithmic}[1]
\Function{Stage}{$\m{C},L,R,G$}
\For{$s = 1, 2, \dots$} \label{ilp:loop_s}
\State $status, A, B, F = \Call{SolveILP}{\m{C},L,R,G,s}$
\If{$status = feasible$}
\State Find $(\m{C}_0,\m{Q}_0),\dots,(\m{C}_{s-1},\m{Q}_{s-1})$ based on $A$, $B$, $F$
\State \Return $(\m{C}_0,\m{Q}_0),\dots,(\m{C}_{s-1},\m{Q}_{s-1})$
\EndIf
\EndFor
\EndFunction
\end{algorithmic}
\caption{\Sys circuit staging algorithm.}
\label{alg:ilp}
\end{algorithm}

\begin{theorem}[Optimality of \stage]
\Cref{alg:ilp} returns the minimum feasible number of stages. \label{thm:ilp-optimal}
\end{theorem}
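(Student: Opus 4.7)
The plan is to decompose the claim into two lemmas: (i) soundness and completeness of the ILP encoding, i.e., for every $s \ge 1$, the ILP is feasible if and only if there exists a valid partition of $\m{C}$ into $s$ stages respecting the hardware limits $L, R, G$; and (ii) monotonicity, i.e., if the ILP is feasible at $s$, then it is feasible at every $s' \ge s$. Combined with the fact that \Cref{alg:ilp} tries $s = 1, 2, \ldots$ in order and returns on the first success, these two facts give that the returned $s$ is the minimum feasible number of stages.

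First I would prove soundness. Given a feasible assignment $(A, B, F, S, T)$ for parameter $s$, I would define, for each gate $g \in \m{G}$, its stage index $k(g) = \min\{k : F_{g,k} = 1\}$ (well-defined by constraint~\ref{c5}), and for each qubit $q$ and stage $k$, I would assign $q$ as local/global/regional according to $A_{q,k}$ and $B_{q,k}$ (well-defined and disjoint by constraint~\ref{cag}, with the correct totals by constraint~\ref{c6}). Constraint~\ref{c3} guarantees that every non-insular qubit of $g$ is local at stage $k(g)$, so each stage's subcircuit can be executed without non-local accesses, and constraint~\ref{c4} guarantees that the induced per-stage ordering respects the dependency DAG $\m{E}$. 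This yields a valid $s$-stage partition.

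Next I would prove completeness. Given any valid $s$-stage partition $(\m{C}_0, \m{Q}_0), \ldots, (\m{C}_{s-1}, \m{Q}_{s-1})$, I would set $A_{q,k} = 1$ iff $q$ is local in $\m{Q}_k$, $B_{q,k} = 1$ iff $q$ is global in $\m{Q}_k$, and $F_{g,k} = 1$ iff $g$ has been executed by stage $k$. The ancillary $S_{q,k}$ and $T_{q,k}$ are then defined to saturate constraints~\ref{c1} and~\ref{cdeft}. Each constraint is verified directly from the definition of a valid partition: the hardware constraints yield~\ref{cag} and~\ref{c6}, the insular-qubit locality property underlying stage execution yields~\ref{c3}, and the topological execution order on $\m{E}$ yields~\ref{c4}; constraints~\ref{c2} and~\ref{c5} follow from $F$ being a ``has-been-executed'' indicator. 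Monotonicity is then an easy padding argument: from a feasible solution at $s$, construct one at $s+1$ by duplicating the last stage (set $A_{q,s} = A_{q,s-1}$, $B_{q,s} = B_{q,s-1}$, $F_{g,s} = 1$, and $S_{q,s-1} = T_{q,s-1} = 0$); all constraints remain satisfied.

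Putting the pieces together: let $s^\star$ be the true minimum number of stages in any valid partition. By completeness the ILP is feasible at $s^\star$, and by monotonicity it is feasible for all $s \ge s^\star$; by soundness it is infeasible for all $s < s^\star$. The loop in \Cref{alg:ilp} therefore stops at exactly $s^\star$. The main obstacle I expect is completeness, specifically verifying constraint~\ref{c3} under the definition of \kwins qubits from \Cref{def:insular_qubit}: I need to argue that the locality condition enforced by the staging semantics (no non-local access during a stage's execution) corresponds exactly to ``every non-insular qubit of every gate in $\m{C}_k$ is local in $\m{Q}_k$'', which is the sole place where the insularity concept enters the ILP and must be invoked carefully to align the combinatorial partition with the linear constraints.
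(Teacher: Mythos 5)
Your proposal is correct and follows essentially the same route as the paper, whose proof is just a terse assertion that the loop over increasing $s$ returns the first feasible value and that the ILP constraints exactly characterize feasible stagings; your soundness/completeness lemmas simply spell out that second assertion in detail. The monotonicity (padding) lemma is harmless but unnecessary, since the first feasible $s$ found by the increasing loop is the minimum of the feasible set regardless of whether that set is upward closed.
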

\begin{proof}
\Cref{alg:ilp} returns the first feasible result of the ILP solver, and the loop in line~\ref{ilp:loop_s} ensures that the number of stages is minimized.
The ILP constraints are the feasible constraints for stages, so \Cref{alg:ilp} returns the minimum feasible number of stages.
\end{proof}
On top of the minimum number of stages, the ILP minimizes the total communication cost (the ILP objective).

\section{Circuit Kernelization}
\label{sec:kernelizer}


After partitioning an input circuit into multiple stages, \Sys must
efficiently execute the gates of each stage on GPUs.
GPU computations are organized as {\em kernels}, each of which is
a function simultaneously executed by multiple hardware threads in a
single-program-multiple-data (SPMD) fashion~\cite{darema1988spmd}.
A straightforward approach to executing gates is to launch a kernel
for each gate, which yields suboptimal performance due to the low arithmetic intensity of each gate. Another extreme is to fuse all gates into a single kernel, which would exponentially increase the amount of computation.

To improve simulation performance on GPUs, \Sys uses a dynamic programming algorithm (\Cref{alg:dp}) to partition the gates of a stage into kernels.
We formulate circuit kernelization as an optimization problem as follows.
%


\begin{problem}[Optimal circuit sequence kernelization]

Given a cost function $\Call{Cost}{\cdot}$ that maps one or multiple kernels (a sequence of gates) to their cost (i.e., a real number),
the optimal circuit sequence kernelization problem takes as input
a quantum circuit $\m{C}$ represented as a sequence of gates,
and
returns a kernel sequence $\m{K}_0,...,\m{K}_{b-1}$ that minimizes
the total execution cost:
\begin{equation}
\sum_{i=0}^{b-1} \Call{Cost}{\m{K}_i}
\label{eqn:dp_objective}
\end{equation}
such that each kernel consists of a contiguous segment of the gate sequence $\m{C}$.
\label{problem:kernelization}
\end{problem}




We note that summing up the cost of individual kernels
(\Cref{eqn:dp_objective}) is a faithful representation of the total
cost, because on a GPU, each kernel may execute in parallel, but
multiple kernels are executed sequentially.
%

Also, note that each kernel consisting of a contiguous segment of the gate sequence $\m{C}$ is a conservative requirement.
It is feasible to execute any kernel sequence that forms a topologically equivalent sequence to $\m{C}$.
However, there are exponentially many ways to consider kernels with non-contiguous segments.
To consider them efficiently, we present a key constraint on the kernels of the \kernelize algorithm in \Cref{alg:dp}.
\ifarxiv
We include a study of an algorithm that only considers contiguous kernels in \Cref{app:dp-simple}.
\else
We include a study of an algorithm that only considers contiguous kernels in the extended version of this paper~\cite{atlas_arxiv}.
\fi

\begin{figure}
\centering
\includegraphics[scale=0.55]{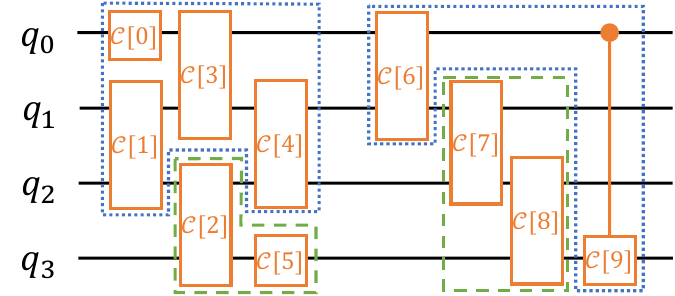}
\caption{Kernel examples. The two green dashed kernels satisfy \Cref{dp-constraint}. The two blue dotted kernels do not satisfy \Cref{dp-constraint} and thus are not considered by the \kernelizeplain algorithm.}
\label{fig:kernel-constraint}
\end{figure}

\begin{constraint}[Constraints on kernels]
Each kernel $\m{K}$ considered by the \kernelize algorithm satisfies the following constraints:
\begin{enumerate}
\item Weak convexity: $\forall\,j_1 < j_2 < j_3$, if $\m{C}[j_1]
  \in \m{K}$, $\m{C}[j_2] \notin \m{K}, \m{C}[j_3] \in \m{K}$, then
  $\Call{Qubits}{\m{C}[j_1]}\cap\Call{Qubits}{\m{C}[j_2]}$ $\cap\,
  \Call{Qubits}{\m{C}[j_3]} = \emptyset$.  Informally, weak convexity
  requires that for any three gates $\m{C}[j_1], \m{C}[j_2],
  \m{C}[j_3]$, if only the middle gate is not in the kernel, then they
  cannot share a qubit.

\item Monotonicity: for any gate $\m{C}[j] \not\in \m{K}$, if
  $\m{C}[j]$ shares a qubit with $\kr{\m{K}}{j}$ then
  $\Call{Qubits}{\m{K}} = \Call{Qubits}{\kr{\m{K}}{j}}$ where
  $\kr{\m{K}}{j} = \m{K} \cap \{\m{C}[0],\m{C}[1],\dots,\m{C}[j-1]\}$.
  Informally, if we decide to exclude a gate from $\m{K}$ while it shares a qubit
  with $\m{K}$, then we fix the qubit set of $\m{K}$ from that gate
  on. 
\end{enumerate}
\label{dp-constraint}
\end{constraint}
In \Cref{fig:kernel-constraint}, the blue dotted kernel on the left violates weak convexity because $\m{C}[1]$, $\m{C}[2]$ and $\m{C}[4]$ share $q_2$, and only $\m{C}[2]$ is not in the kernel. If we allow this kernel to be considered, it will be mutually dependent with the kernel containing $\m{C}[2]$, yielding no feasible results.
The blue dotted kernel on the right does not violate weak convexity, but it violates monotonicity. $\m{C}[7]$ is excluded from the kernel while sharing the qubit $q_1$ with the kernel, so the qubit set of the kernel should be fixed to $\{q_0, q_1\}$. So it cannot include $\m{C}[9]$. In fact, including $\m{C}[9]$ in this kernel causes mutual dependency with the kernel $\{\m{C}[7],\m{C}[8]\}$.

\Cref{dp-constraint} ensures the kernels to form a valid sequence in \Cref{thm:dp-correct}.

\begin{algorithm}[t]
{
\small
\begin{algorithmic}[1]
\Function{$\Call{Kernelize}{}$}{$\m{C}$}

\State {\bf Input:} A quantum circuit $\m{C}$ represented as a sequence of gates.
\State {\bf Output:} A sequence of kernels.
\State // Suppose that we have an ordered kernel set $\KS{}$ in the first $i$ gates, $DP[i,\KS{}]$ stores the minimum cost to kernelize the first $i$ gates \emph{except for the gates in $\KS{}$} and the corresponding kernel sequence.
\State $DP[i,\KS{}] = (\infty, [])$ for all $i, \KS{}$
\State $DP[0,\emptyset] = (0, [])$
\For {$i = 0$ \textbf{to} $|\m{C}|-1$} \label{dp:loop_i}
\For {\textbf{each} $\KS{}$ \textbf{such that} $\exists\, \m{K},\,\m{C}[i] \in \m{K}\in \KS{}$ \textbf{and} $\m{K}$ satisfies \Cref{dp-constraint}} \label{dp:loop_ks}
\State {$\tilde{\KS{}} = \KS{} \setminus \{\m{K}\}$}
\If {$|\m{K}|>1$}
\State {$DP[i+1,\KS{}] = DP[i,\tilde{\KS{}} \cup \{\m{K} \setminus \{\m{C}[i]\}\}]$} \label{dp:add-directly}
\Else \ // $\m{K} = \{\m{C}[i]\}$
\State {$DP[i+1,\KS{}] = \displaystyle\min_{\KS{}' \supseteq \tilde{\KS{}}}\left\{\!DP[i,\KS{}'] + \Call{Cost}{\KS{}'\setminus\tilde{\KS{}}}\right\}$} \label{dp:singleton-kernel}
\EndIf
\EndFor
\EndFor
\State $ DP_{best} = \min_{\KS{}}\left\{DP[|\m{C}|,\KS{}]+\Call{Cost}{\KS{}}\right\}$ \label{dp:compute_f_best}
\State \Return $DP_{best}.kernels$
\EndFunction
\end{algorithmic}
}
\caption{The \kernelizeplain algorithm.
The operators ``+'' on lines~\ref{dp:singleton-kernel} and~\ref{dp:compute_f_best} mean adding the cost and appending the kernel set to the sequence. 
We maintain the order in place in line~\ref{dp:add-directly} (replacing $\m{K} \setminus \{\m{C}[i]\}$ with $\m{K}$) if the monotonicity constraint in \Cref{dp-constraint} applies to $\m{K}$, or move $\m{K}$ to the end of the kernel set otherwise.
We require $\tilde{\KS{}}$ to be a suffix of $\KS{}'$ in line~\ref{dp:singleton-kernel}.}
\label{alg:dp}
\end{algorithm}

\begin{theorem}[Correctness of the \kernelize algorithm]
\Cref{alg:dp} returns a kernel sequence such that concatenating the kernels forms a sequence topologically equivalent to $\m{C}$.
\label{thm:dp-correct}
\end{theorem}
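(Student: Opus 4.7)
My plan is to prove \Cref{thm:dp-correct} by induction on the loop variable $i$ in \Cref{alg:dp}, carrying the invariant that every finite entry $DP[i,\KS{}]$ witnesses a kernel sequence $\m{K}_0,\dots,\m{K}_{b-1}$ (the finalized kernels already recorded in $DP[i,\KS{}].kernels$) such that the concatenation $\m{K}_0\cdot\m{K}_1\cdots\m{K}_{b-1}$ followed by the kernels of $\KS{}$ in their stored order partitions $\{\m{C}[0],\dots,\m{C}[i-1]\}$ and is topologically equivalent to $\m{C}[0..i-1]$. Topological equivalence means that for every qubit $q$, the restriction of the scheduled sequence to gates touching $q$ agrees with the restriction of $\m{C}$ to gates touching $q$. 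Because each kernel receives its gates in order of increasing $i$, the within-kernel order is automatically correct, so the only real content of the invariant is the between-kernel condition: whenever $\m{C}[j_1]$ and $\m{C}[j_2]$ share a qubit with $j_1 < j_2$ and they lie in distinct kernels, the kernel of $\m{C}[j_1]$ must precede the kernel of $\m{C}[j_2]$ in the scheduled order.

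The base case $DP[0,\emptyset] = (0, [])$ is vacuous. For the inductive step I would distinguish the two branches of the inner loop. In line~\ref{dp:add-directly}, the gate $\m{C}[i]$ is appended to an existing kernel $\m{K}$; the predecessor stored $\m{K}\setminus\{\m{C}[i]\}$ either at the same position in the ordered set (when the monotonicity clause continues to apply) or at the end (otherwise, as instructed by the footnote of \Cref{alg:dp}). In line~\ref{dp:singleton-kernel}, the singleton $\{\m{C}[i]\}$ is pushed to the end of $\KS{}$ and simultaneously the kernels of $\KS{}'\setminus\tilde{\KS{}}$---a prefix of the predecessor's pending set, enforced by requiring $\tilde{\KS{}}$ to be a suffix of $\KS{}'$---are finalized in their stored order. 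In both branches, the partition property is preserved by inspection, and the reordering places $\m{K}$ (or the newly finalized kernels) consistently with the induction hypothesis.

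The nontrivial content, and where the argument has real bite, is showing that \Cref{dp-constraint} is exactly strong enough to guarantee the between-kernel ordering condition at every step. The danger is a cyclic dependency: if gates $\m{C}[j_1], \m{C}[j_2], \m{C}[j_3]$ with $j_1 < j_2 < j_3$ all share a common qubit while $\m{C}[j_1], \m{C}[j_3]\in \m{K}$ and $\m{C}[j_2]\in\m{K}'\neq\m{K}$, then $\m{K}$ would be forced both before and after $\m{K}'$. Weak convexity forbids this configuration directly. Monotonicity handles a subtler variant: once a gate is excluded from $\m{K}$ while still sharing a qubit with $\m{K}$, freezing $\Call{Qubits}{\m{K}}$ from that point on prevents a later gate admitted into $\m{K}$ from introducing a fresh qubit-sharing dependency against a kernel that was already ordered earlier, which would otherwise reintroduce a cycle. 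Combining these two constraints, I expect to conclude by induction that $\KS{}$ together with the finalized prefix is always a valid partial topological schedule, so that the terminal minimization in line~\ref{dp:compute_f_best} returns a kernel sequence whose concatenation is topologically equivalent to the input $\m{C}$.
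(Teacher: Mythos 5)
Your plan matches the paper's proof essentially step for step: the same induction on $i$ with the invariant that the finalized kernels followed by the pending kernels of $\KS{}$ in stored order form a sequence topologically equivalent to $\m{C}[0],\dots,\m{C}[i-1]$, the same case split on the two DP transitions (with the suffix requirement handling line~\ref{dp:singleton-kernel}) and on whether monotonicity applies (in-place insertion versus moving $\m{K}$ to the end), and the same combination of weak convexity with the frozen qubit set to rule out a shared qubit between the newly admitted gate $\m{C}[i]$ and gates scheduled after $\m{K}$. The one detail you leave implicit---that moving $\m{K}$ to the end is safe precisely because the failure of the monotonicity condition means no excluded gate occurring after a gate of $\m{K}$ shares a qubit with it---is exactly how the paper closes that branch, so your argument would go through as planned.
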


\begin{proof}
%
%
Proof by induction on $i$ that for each $DP[i,\KS{}].cost < \infty$, appending $\KS{}$ to $DP[i,\KS{}].kernels$ results in the sequence $\kr{\m{C}_{\KS{}}}{i}$ which is topologically equivalent to a the sequence $\m{C}[0], \m{C}[1]$, $\dots, \m{C}[i-1]$ (denote this sequence $\kr{\m{C}}{i}$). This holds for $i = 0$ where $DP[i,\emptyset].kernels$ and $\kr{\m{C}}{0}$ are both empty.

Suppose the induction hypothesis holds for $i$. For $i+1$, for any $\m{K}$ satisfying \Cref{dp-constraint}, if $|\m{K}| = 1$, line~\ref{dp:singleton-kernel} in the algorithm makes the order of appending $\KS{}$ to $DP[i+1,\KS{}].kernels$ the same as the order of appending $\KS{}'$ to $DP[i,\KS{}'].kernels$ for some $\KS{}'$ when $\tilde{\KS{}}$ is a suffix of $\KS{}'$, so we can directly apply the induction hypothesis to prove it for $i+1$.

Suppose $|\m{K}| > 1$. If the monotonicity constraint in \Cref{dp-constraint} applies to $\m{K}$, line~\ref{dp:add-directly} just inserts $\m{C}[i]$ to the end of $\m{K}$ in $\kr{\m{C}_{\KS{}}}{i}$. 
So it suffices to prove that $\m{C}[i]$ does not depend on any gate after the end of $\m{K}$ in $\kr{\m{C}_{\KS{}}}{i}$, i.e., $\m{C}[i]$ does not share any qubit with these gates.

By monotonicity, \begin{equation}
  \Call{Qubits}{\m{K}} = \Call{Qubits}{\kr{\m{K}}{i}}. \label{eqn:dp-correct-monotonicity}
  \end{equation}

For any gate $\m{C}[j_2]$ after the last gate of $\m{K}$ in the sequence $\kr{\m{C}_{\KS{}}}{i}$ before the insertion,
by weak convexity, $\forall\,j_1 < j_2 < i$ such that $\m{C}[j_1] \in \m{K}$ (we know that $\m{C}[j_2] \notin \m{K}, \m{C}[i] \in \m{K}$), $\Call{Qubits}{\m{C}[j_1]}\cap\Call{Qubits}{\m{C}[j_2]}\cap
\Call{Qubits}{\m{C}[i]} = \emptyset$. Because all gates in $\m{K}$ are before $\m{C}[j_2]$ in the sequence $\kr{\m{C}_{\KS{}}}{i}$ before the insertion,
\begin{equation}
  \Call{Qubits}{\kr{\m{K}}{i}}\cap\Call{Qubits}{\m{C}[j_2]}\cap
\Call{Qubits}{\m{C}[i]} = \emptyset \label{eqn:dp-correct-convexity}
\end{equation} where $\kr{\m{K}}{i}$ is the kernel $\m{K}$ before the insertion of $\m{C}[i]$.

By \Cref{eqn:dp-correct-monotonicity},
\begin{equation}
\Call{Qubits}{\m{C}[i]} \subseteq \Call{Qubits}{\kr{\m{K}}{i}}.
\end{equation}
Plugging this into \Cref{eqn:dp-correct-convexity},
\begin{equation}
\Call{Qubits}{\m{C}[j_2]}\cap
\Call{Qubits}{\m{C}[i]} = \emptyset
\end{equation}
for any gate $\m{C}[j_2]$ after the last gate of $\m{K}$ in the sequence $\kr{\m{C}_{\KS{}}}{i}$ before the insertion.

So we can safely insert $\m{C}[i]$ to the end of $\m{K}$ in $\kr{\m{C}_{\KS{}}}{i}$ without breaking the topological equivalence.

If the monotonicity constraint in \Cref{dp-constraint} does not apply to $\m{K}$, line~\ref{dp:add-directly} moves $\m{K}$ to the end of $\kr{\m{C}_{\KS{}}}{i}$. Because the monotonicity constraint does not apply to $\m{K}$, for any gate $\m{C}[j_0] \in \m{K}$, for any gate $\m{C}[j] \notin \m{K}$ after $\m{C}[j_0]$ in the original sequence $\kr{\m{C}_{\KS{}}}{i}$, $\m{C}[j]$ and $\m{C}[j_0]$ do not share any qubits. So we can safely move the entire kernel $\m{K}$ to the end without breaking the topological equivalence.

In all cases, we have proved that $\kr{\m{C}_{\KS{}}}{(i+1)}$ is topologically equivalent to $\kr{\m{C}}{(i+1)}$. So $\kr{\m{C}_{\KS{}}}{|\m{C}|}$ is topologically equivalent to $\kr{\m{C}}{|\m{C}|}$, which proves that any state in line~\ref{dp:compute_f_best} in the algorithm returns a kernel sequence such that concatenating the kernels forms a sequence topologically equivalent to $\m{C}$.
\end{proof}

\begin{theorem}[\Cref{dp-constraint} allows all contiguous kernels]
Any kernel of a contiguous segment of the gate sequence $\m{C}$ satisfies \Cref{dp-constraint}. \label{thm:contiguous}
\end{theorem}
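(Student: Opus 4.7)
The plan is to verify both clauses of \Cref{dp-constraint} directly, observing that for a kernel $\m{K}$ whose gates occupy a contiguous range of indices, both clauses hold trivially or vacuously. Write $\m{K} = \{\m{C}[a],\m{C}[a+1],\dots,\m{C}[b]\}$ for some $0 \le a \le b < |\m{C}|$.

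First I would dispatch weak convexity by showing its hypothesis can never be satisfied. Suppose toward contradiction that there exist indices $j_1 < j_2 < j_3$ with $\m{C}[j_1] \in \m{K}$, $\m{C}[j_3] \in \m{K}$, and $\m{C}[j_2] \notin \m{K}$. Then $a \le j_1$ and $j_3 \le b$, so $a \le j_1 < j_2 < j_3 \le b$ forces $j_2 \in [a,b]$, hence $\m{C}[j_2] \in \m{K}$, a contradiction. Therefore no such triple exists and weak convexity holds vacuously; the qubit-intersection condition is never invoked.

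Next I would verify monotonicity by splitting into two cases on the index $j$ of any gate $\m{C}[j] \notin \m{K}$. If $j < a$, then $\m{K} \cap \{\m{C}[0],\dots,\m{C}[j-1]\} = \emptyset$, so $\kr{\m{K}}{j} = \emptyset$ shares no qubit with $\m{C}[j]$; the premise of monotonicity fails and the implication holds vacuously. If $j > b$, then $\m{K} \subseteq \{\m{C}[0],\dots,\m{C}[j-1]\}$, so $\kr{\m{K}}{j} = \m{K}$ and in particular $\Call{Qubits}{\kr{\m{K}}{j}} = \Call{Qubits}{\m{K}}$, which is exactly the conclusion needed. Because $\m{C}[j] \notin \m{K}$ and $\m{K}$ is contiguous these two cases are exhaustive, so monotonicity is established.

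There is no serious obstacle here; the argument is a direct unfolding of the definition of $\kr{\m{K}}{j}$ together with the interval structure of a contiguous index set. The only mild care needed is to recognize that \Cref{dp-constraint} consists of implications whose hypotheses become degenerate in the contiguous case, so the proof is almost entirely a vacuous-truth argument rather than a computation about qubit sets.
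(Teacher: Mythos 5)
Your proof is correct and follows essentially the same argument as the paper: weak convexity holds vacuously because contiguity rules out the triple $\m{C}[j_1]\in\m{K}$, $\m{C}[j_2]\notin\m{K}$, $\m{C}[j_3]\in\m{K}$, and monotonicity holds because any excluded gate sharing a qubit with $\kr{\m{K}}{j}$ must lie after the segment, forcing $\kr{\m{K}}{j}=\m{K}$. Your version merely spells out the $j<a$ versus $j>b$ case split that the paper leaves implicit.
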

\begin{proof}
For any kernel $\m{K}$ of a contiguous segment of gates, for any $j_1 < j_2 < j_3$ such that $\m{C}[j_1] \in \m{K}, \m{C}[j_2] \notin \m{K}$, we know that $\m{C}[j_3] \notin \m{K}$ by contiguity. So weak convexity holds, and monotonicity holds because $\kr{\m{K}}{j_2} = \m{K}$.
\end{proof}

\begin{figure}
    \centering
    \includegraphics[scale=0.55]{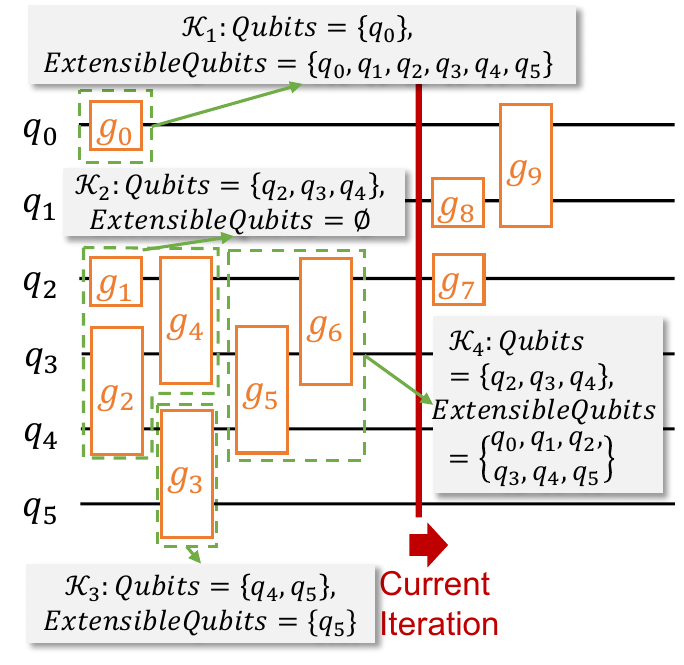}
    \caption{An example DP state in the implementation. The circuit sequence is $\m{C} = [g_0, g_1, \dots, g_9]$.}
    \label{fig:dp_with_optimization}
\end{figure}

\section{Implementation}
\subsection{Reducing Size of DP State in \kernelizeplain}
\label{subsec:dp-simplify}
In a quantum circuit, the number of gates is typically orders of magnitude greater than the number of qubits.
Therefore, instead of tracking the gates for each kernel in \Cref{alg:dp}, we maintain the qubit set as well as an \defn{extensible qubit set} for each kernel. 

\begin{algorithm}[t]
{
\small
\begin{algorithmic}[1]
\State // $\m{C}[i] \in \m{K}\in \KS{}$
\If{$\m{K} = \{\m{C}[i]\}$} // a new kernel
\State $\Call{\extq}{\m{K},i+1} = \Call{AllQubits}{}$
\Else
\State $\Call{\extq}{\m{K},i+1} = \Call{\extq}{\m{K},i}$
\EndIf
\For{$\m{K}' \in (\KS{} \setminus \m{K})$}
\If {$\Call{\extq}{\m{K}',i} = \Call{AllQubits}{}$} \label{extensible:if-all}
\If{$\Call{Qubits}{\m{C}[i]} \cap \Call{Qubits}{\m{K}'} \neq \emptyset$} \label{extensible:intersect}
\State $\Call{\extq}{\m{K}',i+1} = 
\Call{Qubits}{\m{K}'} \setminus \Call{Qubits}{\m{C}[i]}$ \label{extensible:qubits-k}
\Else
\State $\Call{\extq}{\m{K}',i+1} = \Call{AllQubits}{}$
\EndIf
\Else
\State $\Call{\extq}{\m{K}',i+1} = \Call{\extq}{\m{K}',i} \setminus \Call{Qubits}{\m{C}[i]}$ \label{extensible:remove-qubits}
\EndIf
\EndFor
\end{algorithmic}
}
\caption{Maintaining the extensible qubit set 
$\textsc{\extq}$
for each kernel after each iteration of the for loop in line~\ref{dp:loop_ks} of \Cref{alg:dp}.}
\label{alg:extensible}
\end{algorithm}

\begin{definition}[Extensible qubit for a kernel]
For a kernel $\m{K}$ at position $i$, i.e., $\kr{\m{K}}{i}$, a qubit $q$ is extensible if and only if adding a gate operating on qubit $q$ to $\kr{\m{K}}{i}$ satisfies \Cref{dp-constraint}.
Formally, a qubit $q$ is extensible for $\kr{\m{K}}{i}$ if and only if both of the following hold:
\begin{enumerate}
\item Weak convexity: $\forall\, j_1 < j_2 < i$, if $\m{C}[j_1] \in \kr{\m{K}}{i}$, $\m{C}[j_2]$ $\notin \kr{\m{K}}{i}$, then $q \notin \Call{Qubits}{\m{C}[j_1]} \cap \Call{Qubits}{\m{C}[j_2]}$.
\item Monotonicity: for any gate $\m{C}[j] \notin \kr{\m{K}}{i}$, if $j < i$ and $\m{C}[j]$ shares a qubit with $\kr{\m{K}}{j}$, then $q \in \Call{Qubits}{\kr{\m{K}}{j}}$.
\end{enumerate}
\label{def:extensible-qubit}
\end{definition}
This definition is a direct characterization of \Cref{dp-constraint}: adding the gate $\m{C}[i]$ to a kernel $\m{K}$ satisfies \Cref{dp-constraint} if and only if all qubits of $\m{C}[i]$ are extensible for $\kr{\m{K}}{i}$. For example, the extensible qubit set is $\{q_5\}$ for $\m{K}_3$ in \Cref{fig:dp_with_optimization} -- we can only add gates operating only on this qubit to $\m{K}_3$ to satisfy \Cref{dp-constraint}. 



We append the code fragment of \Cref{alg:extensible} to the end of each iteration of the for loop in line~\ref{dp:loop_ks} of \Cref{alg:dp} to maintain the extensible qubit set for each kernel.

\begin{theorem}
\Cref{alg:extensible} correctly computes the extensible qubit set for each kernel.
\label{thm:extensible-correct}
\end{theorem}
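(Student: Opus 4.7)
The plan is to prove \Cref{thm:extensible-correct} by induction on $i$, maintaining jointly that for every kernel $\m{K}$ in the current state, $\Call{\extq}{\m{K},i}$ equals the set of extensible qubits of $\kr{\m{K}}{i}$ per \Cref{def:extensible-qubit}, together with an auxiliary invariant that $\Call{\extq}{\m{K},i} = \Call{AllQubits}{}$ if and only if no gate $\m{C}[j]$ with $j < i$ and $\m{C}[j] \notin \kr{\m{K}}{i}$ has yet shared a qubit with $\kr{\m{K}}{j}$ (i.e., the monotonicity clause has not yet fired for $\m{K}$). The base case $i = 0$ is vacuous because both clauses of \Cref{def:extensible-qubit} quantify over empty ranges, and $\Call{\extq}{\m{K},0}$ is initialized to $\Call{AllQubits}{}$.

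For the inductive step I would split on whether $\m{C}[i] \in \m{K}$ (treated first) or $\m{C}[i] \notin \m{K}'$ (treated second, covering the bulk of \Cref{alg:extensible}). When $\m{C}[i] \in \m{K}$ and $|\m{K}|>1$, a brief index calculation shows the set is unchanged: any new weak-convexity constraint with $j_2 = i$ is vacuous since $\m{C}[i] \in \kr{\m{K}}{i+1}$, a constraint with $j_1 = i$ would require $j_2 > i$ which is out of range, and no monotonicity constraint can fire at $j = i$. The fresh-kernel subcase $\m{K} = \{\m{C}[i]\}$ is handled by observing that $\kr{\m{K}}{j} = \emptyset$ for all $j \le i$, so both clauses are vacuous and setting $\Call{AllQubits}{}$ is correct. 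For $\m{C}[i] \notin \m{K}'$, I would split on whether $\Call{\extq}{\m{K}',i} = \Call{AllQubits}{}$ and within that on whether $\Call{Qubits}{\m{C}[i]} \cap \Call{Qubits}{\m{K}'}$ is empty, mirroring the four paths of the algorithm, and in each path verify the two clauses of \Cref{def:extensible-qubit} and the preservation of the auxiliary $\Call{AllQubits}{}$ invariant.

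The main obstacle is the branch where $\Call{\extq}{\m{K}',i} = \Call{AllQubits}{}$ and $\m{C}[i]$ shares a qubit with $\m{K}'$, i.e., line~\ref{extensible:qubits-k}. Here one must justify exactly the formula $\Call{Qubits}{\m{K}'} \setminus \Call{Qubits}{\m{C}[i]}$ by combining both clauses simultaneously: monotonicity, triggered for the first time by $\m{C}[i]$, forces each extensible $q$ into $\Call{Qubits}{\kr{\m{K}'}{i}} = \Call{Qubits}{\m{K}'}$, while weak convexity with $j_2 = i$ ranging $j_1$ over the gates of $\m{K}'$ excludes exactly those $q \in \Call{Qubits}{\m{K}'} \cap \Call{Qubits}{\m{C}[i]}$; the intersection gives the claimed set. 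The $\Call{\extq}{\m{K}',i} \neq \Call{AllQubits}{}$ branch requires a companion argument that subsequent monotonicity triggers add no new constraint: if the first trigger occurred at $j^*$, then the current $\extq$ is already contained in $\Call{Qubits}{\kr{\m{K}'}{j^*}} \subseteq \Call{Qubits}{\kr{\m{K}'}{j}}$ for every later trigger $j$, so only weak convexity contributes, justifying the simple removal of $\Call{Qubits}{\m{C}[i]}$ on line~\ref{extensible:remove-qubits}. Once this branch is settled, the remaining branches and the preservation of the $\Call{AllQubits}{}$ invariant follow by direct inspection.
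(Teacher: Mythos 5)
Your plan is correct and follows essentially the same route as the paper's proof: induction on $i$ with a case analysis mirroring the algorithm's branches, using the facts that $\Call{\extq}{\m{K}',i} = \Call{AllQubits}{}$ precisely when monotonicity has not yet been triggered, that line~\ref{extensible:qubits-k} combines the monotonicity restriction to $\Call{Qubits}{\m{K}'}$ with the new weak-convexity exclusion of $\Call{Qubits}{\m{C}[i]}$, and that the containment $\Call{\extq}{\m{K}',i} \subseteq \Call{Qubits}{\kr{\m{K}'}{i}}$ after the first trigger justifies line~\ref{extensible:remove-qubits}. The only cosmetic difference is that you carry the ``$\Call{AllQubits}{}$ iff no trigger yet'' fact as an explicit auxiliary invariant, whereas the paper derives it on the fly from weak convexity; both are fine.
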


\begin{proof}
Proof by induction on the position $i$.
For the kernel $\m{K}$ such that $\m{C}[i] \in \m{K}$, if $\m{K} = \{\m{C}[i]\}$, all qubits are extensible by \Cref{def:extensible-qubit}. Otherwise, $\Call{\extq}{\m{K}, i}$ is computed in the previous iteration.

$\forall\,\m{K}' \in (\KS{} \setminus \{\{\m{C}[i]\}\})$, suppose $\Call{\extq}{\m{K}',i}$ is computed correctly.

Case 0. Consider the kernel $\m{K}$ such that $\m{C}[i] \in \m{K}$ while $|\m{K}|>1$. In the condition for weak convexity, $\m{C}[j_2]$ cannot be $\m{C}[i]$ because it requires $\m{C}[j_2] \notin \kr{\m{K}}{(i+1)}$, so this constraint is exactly the same as weak convexity for $\kr{\m{K}}{i}$.

If monotonicity applies to $\kr{\m{K}}{i}$, it still applies to $\kr{\m{K}}{(i+1)}$, and $\Call{Qubits}{\kr{\m{K}}{(i+1)}} = \Call{Qubits}{\kr{\m{K}}{i}}$; if monotonicity does not apply to $\kr{\m{K}}{i}$,  $\m{C}[j]$ cannot be $\m{C}[i]$ in the condition because it requires $\m{C}[j] \notin \kr{\m{K}}{(i+1)}$, so it still does not apply to $\kr{\m{K}}{(i+1)}$.
So $\Call{\extq}{\m{K},i+1} = \Call{\extq}{\m{K},i}$.

Now we case over each kernel $\m{K}' \in (\KS{} \setminus \m{K})$. If monotonicity applies to $\kr{\m{K}'}{i}$, there will be a gate $\m{C}[j] \notin \kr{\m{K}'}{i}$ satisfying $j < i$ and $\Call{Qubits}{\m{C}[j]} \cup \Call{Qubits}{\kr{\m{K}'}{j}} \neq \emptyset$, causing all qubits in this intersection to be not extensible for $\kr{\m{K}}{i}$ by weak convexity (picking $j_2 = j$). So if the condition in line~\ref{extensible:if-all} of \Cref{alg:extensible} holds, i.e., all qubits are extensible for $\kr{\m{K}'}{i}$, then monotonicity does not apply to $\kr{\m{K}'}{i}$.

Case 1. If the condition in line~\ref{extensible:intersect} of \Cref{alg:extensible} does not hold, i.e., $\Call{Qubits}{\m{C}[i]} \cap \Call{Qubits}{\kr{\m{K}'}{(i+1)}}=\emptyset$, then monotonicity does not apply to $\kr{\m{K}'}{(i+1)}$. Consider the additional constraint of weak convexity for $\kr{\m{K}'}{(i+1)}$ than weak convexity for $\kr{\m{K}'}{i}$, that is, $\forall\, j_1 < j_2 = i$, if $\m{C}[j_1] \in \kr{\m{K}'}{(i+1)}, \m{C}[i] \notin \kr{\m{K}'}{(i+1)}$, then $q \notin \Call{Qubits}{\m{C}[j_1]} \cap \Call{Qubits}{\m{C}[i]}$. However, $\Call{Qubits}{\m{C}[j_1]} \cap \Call{Qubits}{\m{C}[i]} = \emptyset$, so 
$\Call{\extq}{\m{K}',i+1} = \Call{AllQubits}{}$.

Case 2. If the condition in line~\ref{extensible:intersect} of \Cref{alg:extensible} holds, i.e., $\Call{Qubits}{\m{C}[i]} \cap \Call{Qubits}{\kr{\m{K}'}{(i+1)}}\neq \emptyset$,
then monotonicity applies to $\kr{\m{K}'}{(i+1)}$ because $\m{C}[i] \notin \kr{\m{K}'}{(i+1)}$ and $\m{C}[i]$ shares a qubit with $\kr{\m{K}'}{(i+1)}$. So only qubits in $\Call{Qubits}{\kr{\m{K}'}{(i+1)}}$ can possibly be extensible.

$\forall\,q \in \Call{Qubits}{\kr{\m{K}'}{(i+1)}}$, $q$ is extensible if and only if it satisfies weak convexity. Because all qubits were extensible for $\kr{\m{K}'}{i}$, we only need to consider the constraint on $\forall j_1 < j_2 = i$.
Because $q \in \Call{Qubits}{\kr{\m{K}'}{(i+1)}}$, we can always find some $j_1 < i$ satisfying $q \in \Call{Qubits}{\m{C}[j_1]}$ and $\m{C}[j_1] \in \kr{\m{K}'}{(i+1)}$, so $q$ is extensible if and only if $q \notin \Call{Qubits}{\m{C}[i]}$. Therefore, $\Call{\extq}{\m{K}',i+1} = 
\Call{Qubits}{\kr{\m{K}'}{(i+1)}} \setminus \Call{Qubits}{\m{C}[i]}$.

Case 3. If the condition in line~\ref{extensible:if-all} of \Cref{alg:extensible} does not hold, i.e., not all qubits are extensible for $\kr{\m{K}'}{i}$, we know that line~\ref{extensible:qubits-k} has been executed for some $i' < i$ (so monotonicity applies to $\kr{\m{K}'}{(i'+1)}$), so $\Call{\extq}{\m{K}',i'+1}\subseteq \Call{Qubits}{\kr{\m{K}'}{i'}}$. Because line~\ref{extensible:remove-qubits} of \Cref{alg:extensible} ensures $\Call{\extq}{\m{K}',i''+1} \subseteq \Call{\extq}{\m{K}',i''}$ for all $i' < i'' < i$, we have $\Call{\extq}{\m{K}', i}\subseteq \Call{Qubits}{\kr{\m{K}'}{i'}} \subseteq \Call{Qubits}{\kr{\m{K}'}{i}}$.

For each qubit $q$, if it is not in $\Call{\extq}{\m{K}',i}$, it cannot be in $\Call{\extq}{\m{K}',i+1}$ because \Cref{def:extensible-qubit} is more restrictive when $i$ increases; for each $q \in \Call{\extq}{\m{K}',i}$, $q \in \Call{\extq}{\m{K}',i+1}$ if and only if it satisfies the additional weak convexity constraint $\forall\, j_1 < j_2 = i$.
Because $\Call{\extq}{\m{K}',i}\subseteq \Call{Qubits}{\kr{\m{K}'}{i}}$, we can always find $j_1 < i$ such that $q \in \Call{Qubits}{\m{C}[j_1]}$ and $\m{C}[j_1]\in \kr{\m{K}}{i}$, so $q$ is extensible if and only if $q \notin \Call{Qubits}{\m{C}[i]}$.
Therefore, we conclude that $\Call{\extq}{\m{K}',i+1} = \Call{\extq}{\m{K}',i} \setminus \Call{Qubits}{\m{C}[i]}$.

In all of these cases, we have proved the correctness of $\Call{\extq}{\m{K}',i+1}$ for any $\m{K}'$. So \Cref{alg:extensible} correctly computes the extensible qubit set for each kernel.
\end{proof}

Because we cannot add any gates to kernels with an empty extensible qubit set such as $\m{K}_2$, we compute their cost and remove them from $\KS{}$ in line~\ref{dp:singleton-kernel} of \Cref{alg:dp}.
Because line~\ref{extensible:qubits-k} of \Cref{alg:extensible} restricts the extensible qubit set to be a subset of the qubit set of $\m{K}'$, we no longer need to keep track of the qubit set because it will no longer change.
%

\begin{theorem}[Time complexity of the \kernelize algorithm]
With the extensible qubit sets maintained in \Cref{alg:extensible}, the \kernelize algorithm runs in $O\left(\left(\frac{3.2 n}{\ln{n}}\right)^n \cdot |\m{C}|\right)$, where $n$ is the number of qubits.
\label{thm:dp-complexity}
\end{theorem}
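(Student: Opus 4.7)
The plan is to combine two separate bounds: the outer loop in line~\ref{dp:loop_i} of \Cref{alg:dp} iterating over the gate index $i$ directly contributes the factor $|\m{C}|$, while the remaining factor $\left(\frac{3.2\,n}{\ln n}\right)^n$ must come from bounding both the number of distinct ordered kernel sets $\KS{}$ the DP ever needs to store and the work per transition, absorbing the polynomial overhead into the asymptotic.

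To bound the number of reachable values of $\KS{}$, I would first invoke \Cref{thm:extensible-correct} to argue that after the extensible-set optimization each kernel in $\KS{}$ is fully described by its extensible qubit set together with its ordering position, so that the full gate list need not be remembered. A key structural observation is that once a kernel has had its monotonicity constraint triggered (line~\ref{extensible:qubits-k} of \Cref{alg:extensible}), its extensible set becomes a subset of its fixed qubit set, and no two such \emph{settled} kernels can share a qubit in their extensible sets without violating \Cref{dp-constraint}; hence the settled kernels' extensible sets form a disjoint family of subsets of $\{0,\dots,n-1\}$. At most a bounded number of additional \emph{open} kernels still carry the universal extensible set returned by $\Call{AllQubits}{}$, so the ensemble is essentially an ordered partition of a subset of the $n$ qubits.

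Counting such ordered partitions is a classical combinatorial problem. The plan is to bound the count by a sum of the form $\sum_{k \le n} k! \cdot S(n,k)$, where $S(n,k)$ denotes the Stirling numbers of the second kind, and then apply Stirling's approximation: the $k$ maximizing the summand grows like $n/\ln n$, and optimizing the resulting expression yields a bound of the form $(c \cdot n/\ln n)^n$ for an explicit constant $c$. The per-state cost inside the inner loop of \Cref{alg:dp}---maintaining extensible qubit sets via \Cref{alg:extensible} and computing the minimum in line~\ref{dp:singleton-kernel}---is polynomial in $n$; after amortizing that minimum across transitions, the overhead contributes only subexponential factors that can be folded into the exponent to yield the claimed constant $c \le 3.2$.

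The hardest part will be the combinatorial counting with an explicit constant. A naive enumeration over arbitrary ordered families of qubit subsets would give a much weaker bound of the form $(2^n)^n$, so the argument must genuinely exploit the disjointness of extensible sets of settled kernels and carefully absorb the handful of open kernels without double counting. A secondary subtlety is ensuring that the nested $\min_{\KS{}' \supseteq \tilde{\KS{}}}$ in line~\ref{dp:singleton-kernel} does not by itself inflate the exponent: I expect this requires reorganizing the DP so that each feasible pair $(\KS{}, \KS{}')$ is visited at most once across the whole run, e.g., by materializing the minimization as a sum of contributions indexed by $\KS{}'$ rather than as an explicit loop inside the state for $\KS{}$.
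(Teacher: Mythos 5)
There is a genuine gap in the counting step, and it is exactly the point on which the paper's proof turns. You propose to bound the number of DP states by the number of \emph{ordered} partitions, $\sum_{k\le n} k!\,S(n,k)$, and claim the dominant summand has $k\approx n/\ln n$ so that the sum is $(c\,n/\ln n)^n$. That asymptotic is wrong: the $n/\ln n$ location of the dominant term is a property of the \emph{unordered} sum $\sum_k S(n,k)=B_n$; for the ordered sum (the Fubini numbers) the dominant $k$ is $\Theta(n)$ (about $n/(2\ln 2)$) and the total is $\Theta^*\!\bigl((n/(e\ln 2))^n\bigr)$, which is exponentially larger than $\bigl(3.2\,n/\ln n\bigr)^n$. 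So counting ordered kernel sets cannot deliver the claimed bound. The paper's proof avoids this by first observing that, once the extensible qubit sets of \Cref{alg:extensible} are maintained, the \emph{order} of kernels in $\KS{}$ no longer affects any DP transition, so states may be identified up to permutation of the kernels; the relevant disjoint set families are then counted by the Bell number $B_{n+1}<\bigl(\tfrac{0.792(n+1)}{\ln(n+2)}\bigr)^{n+1}$. Your proposal never establishes (or uses) this order-irrelevance, and indeed you explicitly keep ``ordering position'' as part of the state description, which is what inflates the count.

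Two further, smaller discrepancies. First, your structural picture is slightly off: the kernels whose extensible set is still $\Call{AllQubits}{}$ are not ``a bounded number''; there can be up to $n$ of them. What the paper uses is that, taking the qubit set for such kernels and the extensible set for the settled ones, \emph{all} of these sets are pairwise disjoint (proved by induction on $i$), so together they form a partition of a subset of the $n$ qubits. Second, the constant $3.2$ does not come from optimizing a Stirling-type sum: it is $0.792\times 4$, where the extra $4^n$ accounts for (a) whether each stored set plays the role of a qubit set or of an extensible set, and (b) the fusion versus shared-memory tag attached to each kernel, which is part of the DP state. Your proposal omits both factors; with them included but with ordered counting, the bound would be even further from the statement. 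The per-gate factor $|\m{C}|$ and the polynomial per-transition work (including your concern about the $\min_{\KS{}'\supseteq\tilde{\KS{}}}$, which the implementation handles incrementally by charging removed kernels when their extensible set becomes empty) are handled the same way as in the paper and are not the issue.
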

\begin{proof}
For each iteration $i$ in \Cref{alg:dp}, let us compute the maximum number of kernel sets $\KS{}$ that \Sys must consider.

We observe that if we consider the qubit set for each kernel with the extensible qubit set being all qubits, and the extensible qubit set for other kernels, any two of these sets do not intersect. This can be proven by induction on $i$ and the fact that whenever we add $\Call{Qubits}{\m{C}[i]}$ to $\m{K}$, we effectively remove these qubits from all other sets.

Because kernels with empty extensible qubit sets are removed from $\KS{}$, there are at most $n$ kernels in $\KS{}$ and the number of ways to partition all qubits into disjoint qubit sets is at most the Bell number $B_{n+1}$, which is less than $\left(\frac{0.792(n+1)}{\ln(n+2)}\right)^{n+1}$~\cite{berend_improved_2010}. Each qubit set can be either the qubit set or the extensible qubit set of a kernel, so we need to multiply this number by at most $2^n$; each kernel can be either fusion or shared-memory, so we need to multiply this number by another $2^n$.
With the extensible qubit sets maintained, the order of kernels does not affect the DP transitions, so the number of different kernel sets is at most $4^n B_{n+1}$.
Each iteration of the for loop in line~\ref{dp:loop_ks} of \Cref{alg:dp} takes polynomial time, so the time per iteration $i$ is at most $O\left(\left(\frac{3.2 n}{\ln{n}}\right)^n\right)$. There are $|\m{C}|$ iterations, so the total time complexity is $O\left(\left(\frac{3.2 n}{\ln{n}}\right)^n \cdot |\m{C}|\right)$.
\end{proof}


\ifarxiv
We also implemented other optimizations in the \kernelize algorithm.
See \Cref{subsec:dp_optimizations} for details.
\else
We also implemented other optimizations in the \kernelize algorithm.
See the extended version~\cite{atlas_arxiv} for details.
\fi

\begin{table*}[t]
\caption{Our benchmark circuits and their size (number of gates). 
}
\label{tab:benchmarks}
\centering
\small
\begin{tabular}{l|l|rrrrrrrrr}
\toprule
\textbf{Circuit} & \textbf{Description} &\multicolumn{9}{c}{\textbf{Number of qubits}} \\
\textbf{Name} & 
& \textbf{28}  & \textbf{29}  & \textbf{30}  & \textbf{31} & \textbf{32} & \textbf{33} & \textbf{34} & \textbf{35} & \textbf{36} \\
\midrule[2pt]
ae & amplitude estimation & 514 & 547 & 581 & 616 & 652 & 689 & 727 & 766 & 806 \\
dj & Deutsch–Jozsa algorithm & 82 & 85 & 88 & 91 & 94 & 97 & 100 & 103 & 106 \\   
ghz & GHZ state & 28 & 29 & 30 & 31 & 32 & 33 & 34 & 35 & 36 \\                   
graphstate & graph state & 56 & 58 & 60 & 62 & 64 & 66 & 68 & 70 & 72 \\          
ising & Ising model & 302 & 313 & 324 & 335 & 346 & 357 & 368 & 379 & 390 \\                               
qft & quantum Fourier transform & 406 & 435 & 465 & 496 & 528 & 561 & 595 & 630 & 666 \\                   
qpeexact & exact quantum phase estimation & 432 & 463 & 493 & 524 & 559 & 593 & 628 & 664 & 701 \\         
qsvm & quantum support vector machine & 274 & 284 & 294 & 304 & 314 & 324 & 334 & 344 & 354 \\             
su2random & SU2 ansatz with random parameters & 1246 & 1334 & 1425 & 1519 & 1616 & 1716 & 1819 & 1925 & 2034 \\
vqc & variational quantum classifier & 1873 & 1998 & 2127 & 2260 & 2397 & 2538 & 2683 & 2832 & 2985 \\     
wstate & W state & 109 & 113 & 117 & 121 & 125 & 129 & 133 & 137 & 141 \\
\bottomrule
\end{tabular}%
\end{table*}

\subsection{Cost Function in \kernelizeplain}
\label{subsec:dp-cost}
The cost function used in \Cref{eqn:dp_objective} should faithfully represent the running time of each kernel. Inspired by \hyq{}~\cite{chen2021hyquas}, we use two approaches to execute each kernel:
\begin{enumerate}
\item \defn{Fusion}: Fuse all gates in a kernel into a single gate by pre-computing the product of the gate matrices and executing that single gate using \cuq{}~\cite{bayraktar_cuquantum_2023}.
The cost function maps the number of qubits in the kernel to a constant, reflecting the running time of a matrix multiplication of that size.
\item \defn{Shared-memory}: Load the state vector into GPU shared memory in batches and execute the gates one by one instead of fusing them to a single matrix.~\footnote{Shared-memory kernels correspond to \tcd{SHM-GROUPING} in \hyq{}. Similar to \hyq{}, to improve the I/O efficiency of shared-memory kernels, we require the three least significant qubits of the state vector to be in all shared-memory kernels. As a result, each state vector load consists of at least $2^3 = 8$ complex numbers (i.e., 128 bytes as each complex number consists of 2 double-precision floating-point numbers).} The cost function is $\alpha + \sum_{g\in \m{K}} \Call{Cost}{g}$, where $\alpha$ is a constant that corresponds to the time to load a micro-batch of state coefficients into GPU shared memory, and $\Call{Cost}{g}$ is the time to simulate the application of the gate $g$ in the GPU shared memory.
\end{enumerate}
The way to determine these constant values is described in \Cref{subsec:eval-setup}.

In addition to the qubit set and the extensible qubit set, we add a Boolean tag for the type of each kernel (fusion or shared-memory) in \kernelize, and maintain the cost of the kernel during the DP algorithm according to the type. Whenever a new kernel is created, we generate two copies of DP states with the new kernel's type being fusion or shared-memory correspondingly.

\subsection{\Sys}
We built our system \Sys on top of FlexFlow~\cite{jia2019flexflow}, a distributed multi-GPU system for DNN training.
\Sys uses the mapper interface of FlexFlow to distribute the state vector across the DRAM and GPU device memory of a GPU cluster, and uses Legion~\cite{Legion12}, FlexFlow's underlying task-based runtime, to launch simulation tasks on CPUs and GPUs and automatically overlap communications with computations.
Furthermore, \Sys uses the NCCL library~\cite{nccl} to perform inter-GPU communication for sharding. We set the cost factor of the inter-node communication in the circuit staging algorithm $c = 3$ in \Cref{eqn:ilp_objective_informal}.

\Sys is publicly available as an open-source project~\cite{atlas_github} and also in the artifact supporting this paper~\cite{atlas_zenodo}.
\section{Evaluation}
\if 0
\begin{itemize}
    \item End-to-end evaluation: comparison against \hyq{}, Quest, \cuq{} on single node and multiple nodes
    \item Scalability analysis: evaluating the offloading-version (Shiyi's current item)
    \item Ablation study: evaluating the ILP algorithm using different numbers of local qubits, comparing with \snq{} (done)
    \item Kernel evaluation: comparing the DP algorithm with the heuristics used in \hyq{} and \cuq{} kernels ()
    \item Sensitivity study: evaluating how changing the pruning threshold in DP affects QSim's performance.
\end{itemize}
\fi

\subsection{Experimental Setup}
\label{subsec:eval-setup}
We use the Perlmutter supercomputer~\cite{perlmutter} to evaluate \Sys. Each compute node is equipped with an AMD EPYC 7763 64-core 128-thread processor, 256 GB DRAM, and four NVIDIA A100-SXM4-40GB GPUs. The nodes are connected with HPE Slingshot 200 Gb/s interconnects. The programs are compiled using GCC 12.3.0, CUDA 12.2, and NCCL 2.19.4.

The cost function used in \kernelize has some constants that require benchmarking on the GPU.
For fusion kernels, we measure their execution time with different numbers of qubits. 
For shared-memory kernels, we measure the run time of an empty shared-memory kernel to estimate the cost of loading a state vector to GPU shared memory, and profile the run times for different types of gates using the GPU shared memory. 


\if 0
\begin{table}
\caption{\label{tab:benchmarks}Circuits used as benchmarks. (number of gates only reported for 28-qubit circuits).}
\centering
\small
\resizebox{\columnwidth}{!}{%
\begin{tabular}{l|l|l|l|l|l|l}
\toprule
\textbf{Name} & Num. Gates\\
\midrule[2pt]
Amplitude Estimation (ae) & 514   \\
Graph State (graphstate) &   56\\
GHZ State (ghz) &  28\\
Quantum Fourier Transformation (qft)&  406 \\
Quantum Phase Estimation exact (qpeexact) & 432\\
SU2 ansatz with Random Parameters (su2random) & 1246\\
\bottomrule
\end{tabular}%
}
\end{table}
\fi

\paragraph{Benchmarks} We collect 11 types of scalable quantum circuits from the MQT Bench~\cite{mqtbench} and NWQBench~\cite{li2021qasmbench}, where we can select the desired number of qubits for each circuit type.
This allows us to conduct a weak-scaling evaluation, where we can compare the performance of \Sys and existing GPU-based quantum circuit simulators using different numbers of GPUs.
\Cref{tab:benchmarks} summarizes the circuits used in our evaluation.

\paragraph{Preprocessing circuits} To speedup simulation, we partition each circuit by (1) splitting it into stages using the \stage algorithm, and (2) kernelizing each stage using the \kernelize algorithm. 
%
This preprocessing needs to be done once per circuit and finishes in 7.2 seconds on average for each circuit in a single thread on an Intel Xeon W-1350 @ 3.30GHz CPU. 
In our evaluation, we use the PuLP library~\cite{dunning_pulp_2011} with the HiGHS solver~\cite{huangfu_parallelizing_2018} for ILP.
The ILP-based \stage algorithm takes 3.3 seconds on average, and \kernelize takes 3.9 seconds on average.
\ifarxiv
A detailed analysis of the running time of \kernelize is included in \Cref{subsec:app-kernelizer}.
\else
A detailed analysis of the running time of \kernelize is included in the extended version~\cite{atlas_arxiv}.
\fi

\begin{figure*}
	\centering
	\subfloat[ae]{\includegraphics[width=0.33\textwidth]{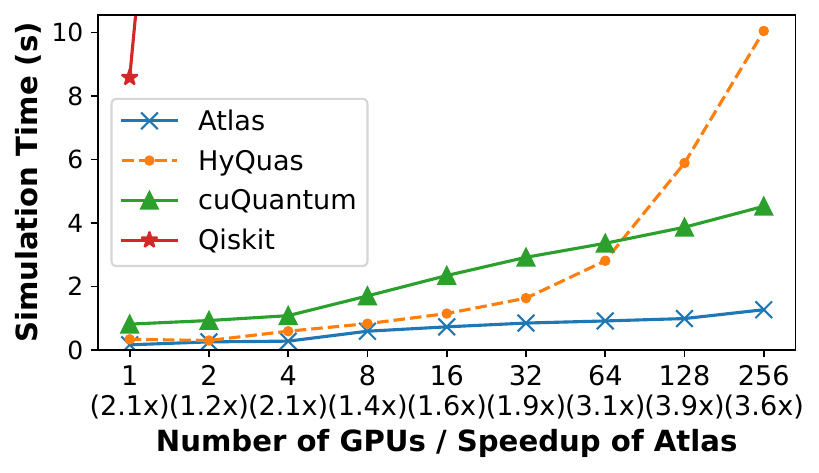}\label{fig:ae}} 
	\subfloat[qft]{\includegraphics[width=0.33\textwidth]{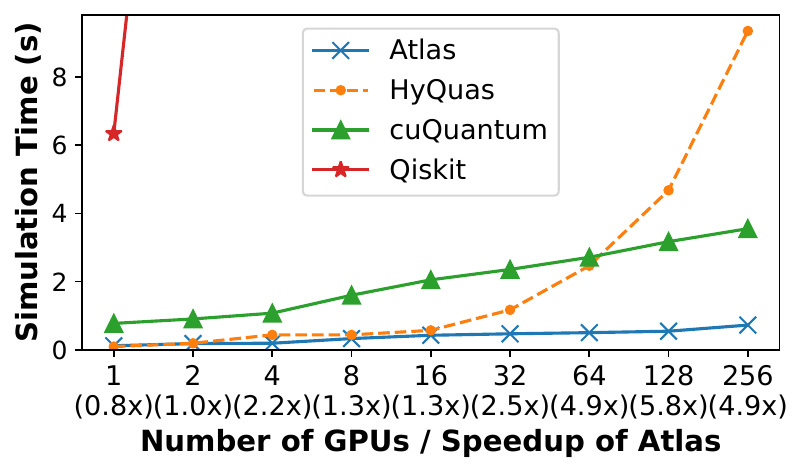}\label{fig:qft}} 
	\subfloat[graphstate]{\includegraphics[width=0.33\textwidth]{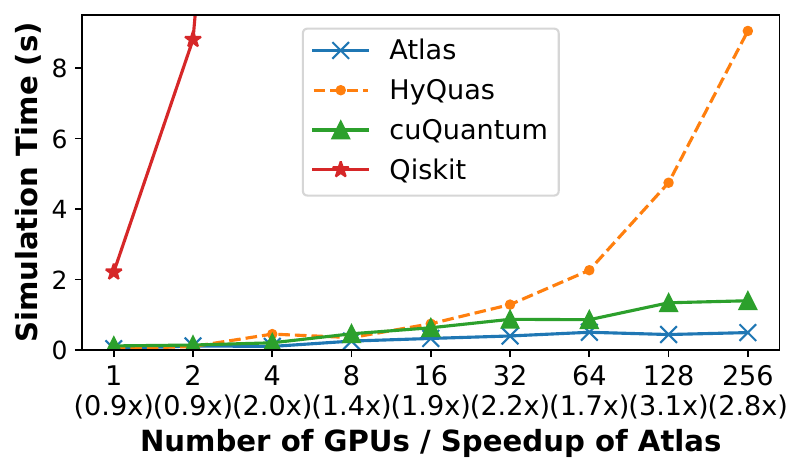}\label{fig:gh}}
 \hfill
 \subfloat[wstate]{\includegraphics[width=0.33\textwidth]{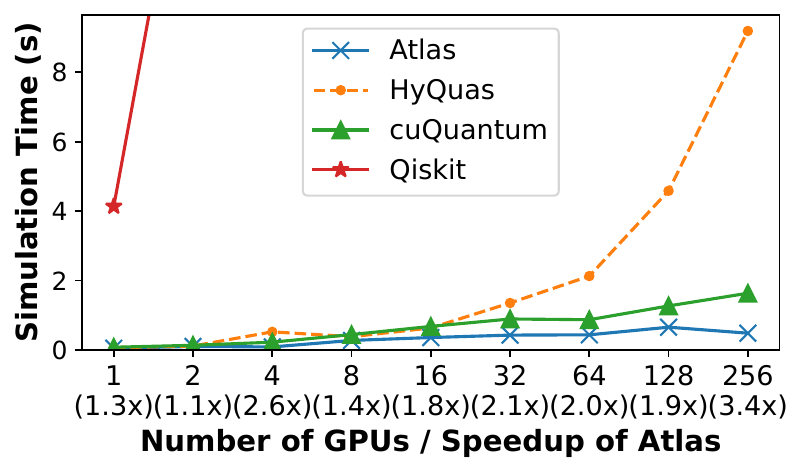}\label{fig:wstate}}
 \subfloat[qsvm]{\includegraphics[width=0.33\textwidth]{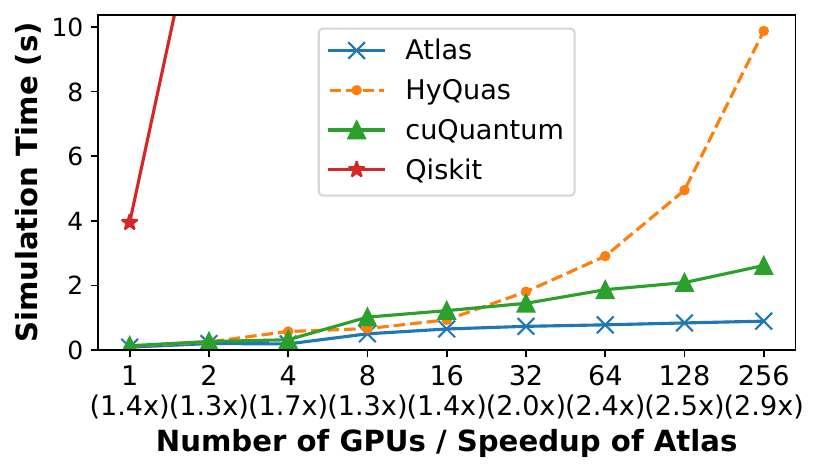}\label{fig:qsvm}}
        \subfloat[ghz]{\includegraphics[width=0.33\textwidth]{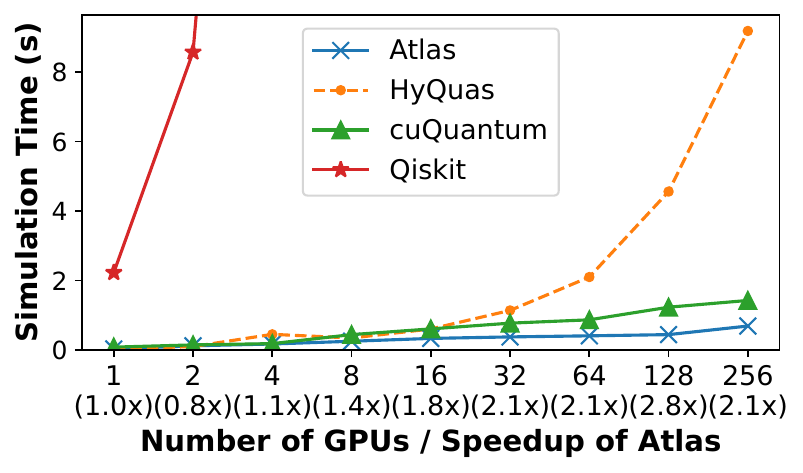}\label{fig:ghz}}
        \hfill
        \subfloat[qpeexact]{\includegraphics[width=0.33\textwidth]{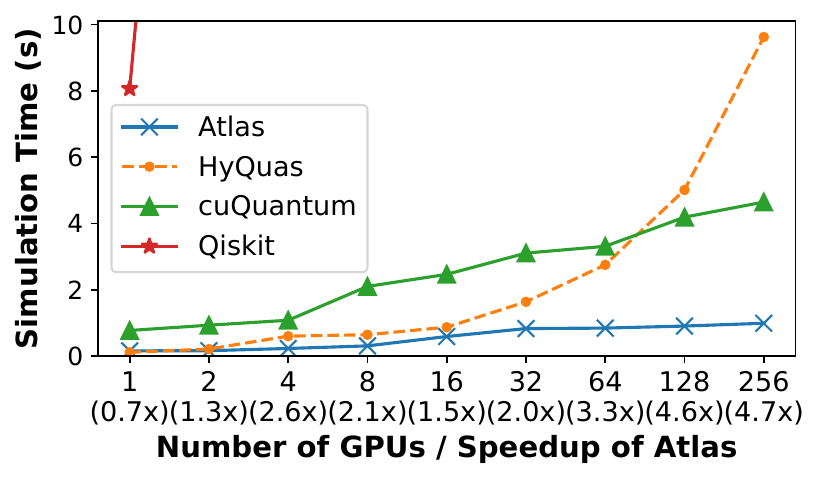}\label{fig:qpeexact}}
        \subfloat[su2random]{\includegraphics[width=0.33\textwidth]{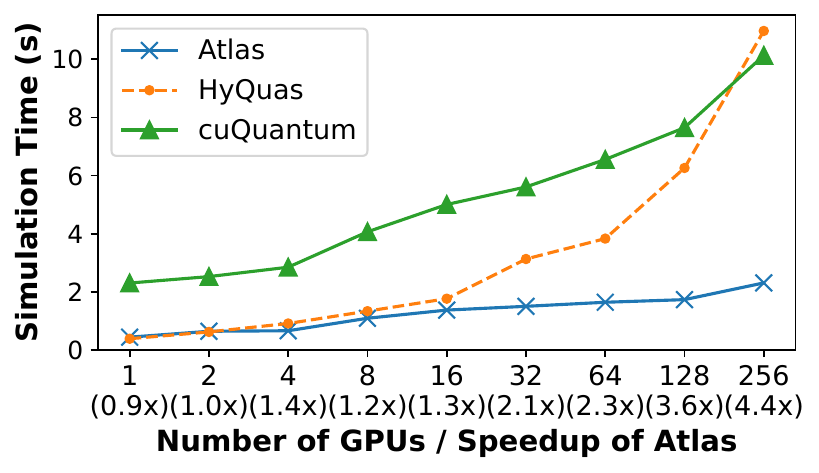}\label{fig:su2random}}
        \subfloat[vqc]{\includegraphics[width=0.33\textwidth]{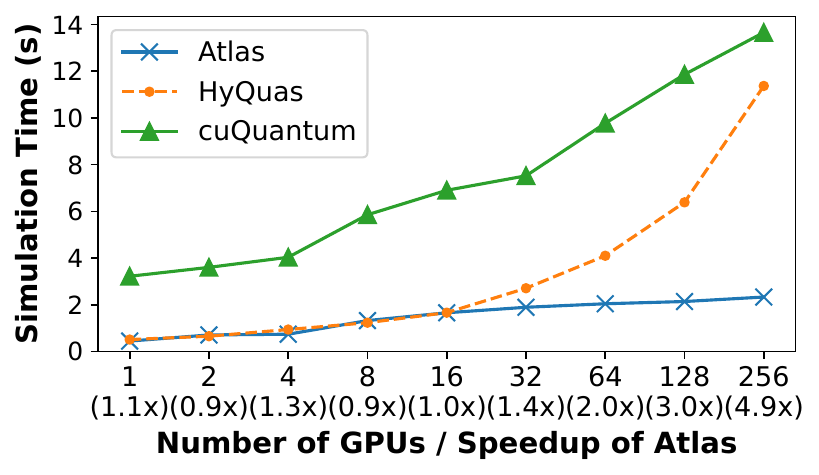}\label{fig:vqc}}
        \hfill
        \subfloat[ising]{\includegraphics[width=0.33\textwidth]{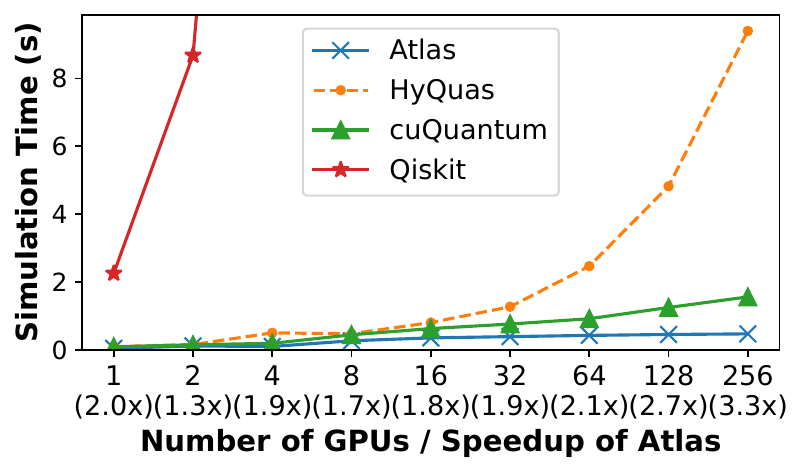}\label{fig:ising}}
        \subfloat[dj]{\includegraphics[width=0.33\textwidth]{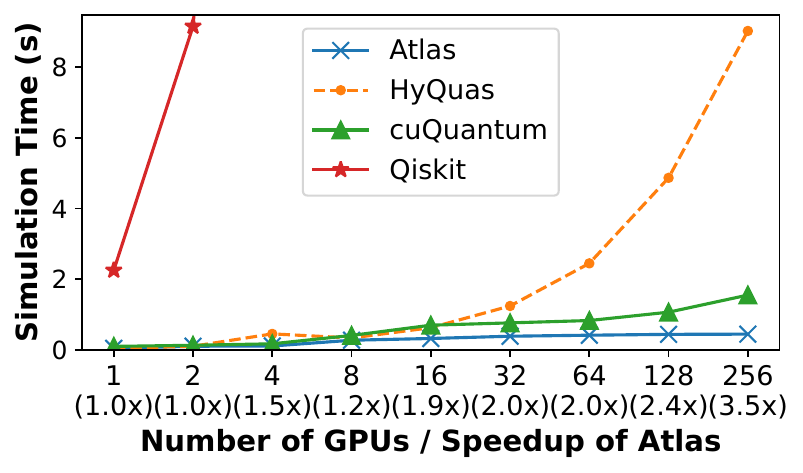}\label{fig:dj}}
        \subfloat[vqc (log scale)]{\includegraphics[width=0.33\textwidth]{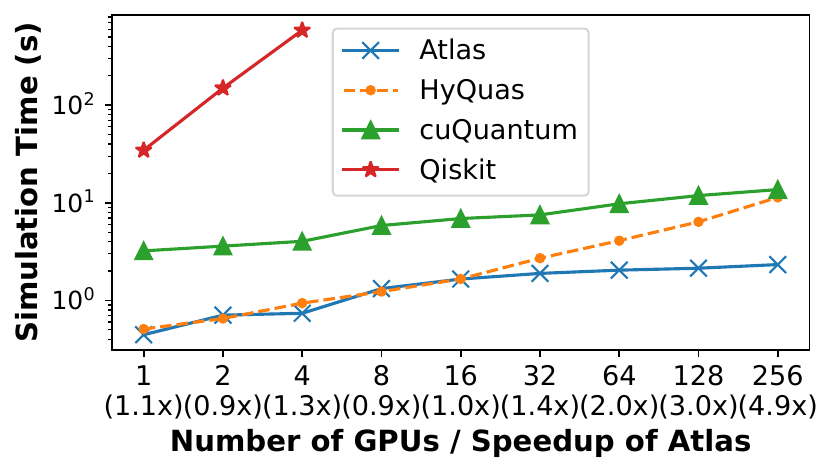}\label{fig:vqc_log}}
	\caption{Weak scaling of \sys{}, \hyq{}, \cuq{}, and \qis{} with 28 local qubits as the number of global qubits increases from 0 (on 1 GPU) to 8 (on 256 GPUs). \qis{} is slow and usually does not fit into our charts.}
	\label{fig:e2e}
\end{figure*}

\subsection{End-to-end Performance}
\label{subsec:end-to-end}
We first compare \Sys against existing distributed GPU-based quantum circuit simulators, including \hyq{}~\cite{chen2021hyquas}, \cuq{}~\cite{bayraktar_cuquantum_2023}, and \qis{}~\cite{qiskit2019}. \uniq{}~\cite{zhang_uniq_2022} is extremely similar to \hyq{} in state-vector-based GPU-based quantum circuit simulation, so we do not compare with \uniq{}.
Both \qis{} and \cuq{} use the \qis{} Python interface as the frontend. \cuq{} uses Aer's cuStateVec integration (\textsf{cusvaer}) to simulate quantum circuits, and \qis{} directly uses its native Aer GPU backend.
All baselines perform state-vector-based simulation and directly store the entire state vector on GPUs.
While \Sys also supports more scalable quantum circuit simulation by offloading the state vector to a much larger CPU DRAM, to conduct a fair comparison, in this experiment, we disable \Syss DRAM offloading and directly store the entire state vector on GPUs. We further evaluate \Syss DRAM offloading performance in \Cref{subsec:scalability}. 




Figure~\ref{fig:e2e} shows the end-to-end simulation performance of the 11 circuit families on up to 256 GPUs (on 64 nodes). 
We use 28 local qubits and increase the number of non-local qubits from 0 (on 1 GPU) to 8 (on 256 GPUs). The number of regional qubits is at most 2 (there are 4 GPUs in each node).
\qis{} is slow, so we only evaluate it on up to 4 GPUs and present a log-scale plot in \Cref{fig:vqc_log}.


\Sys is up to 20.2$\times$ (4.0$\times$ on average) faster than \hyq{}, 7.2$\times$ (3.2$\times$ on average) faster than \cuq{}, and 2,126$\times$ (286$\times$ on average) faster than \qis{} across all types of circuits and possible numbers of GPUs supported by the baselines. 
The speedups over existing systems are achieved by two important optimizations. 
First, \Syss ILP-based circuit staging algorithm can discover a staging strategy that minimizes expensive inter-node communications, allowing \Sys to scale extremely well as the number of GPUs increases.
For example, to scale \texttt{graphstate} from 28 qubits (on 1 GPU) to 36 qubits (on 256 GPUs), \hyq{}' simulation time increases by 267.9$\times$, while the simulation time of \Sys only increases by 13.7$\times$.
Second, the DP-based \kernelize algorithm enables \Sys to achieve high-performance circuit performance on each GPU.

\begin{figure}
    \centering
    \includegraphics[width=0.96\linewidth]{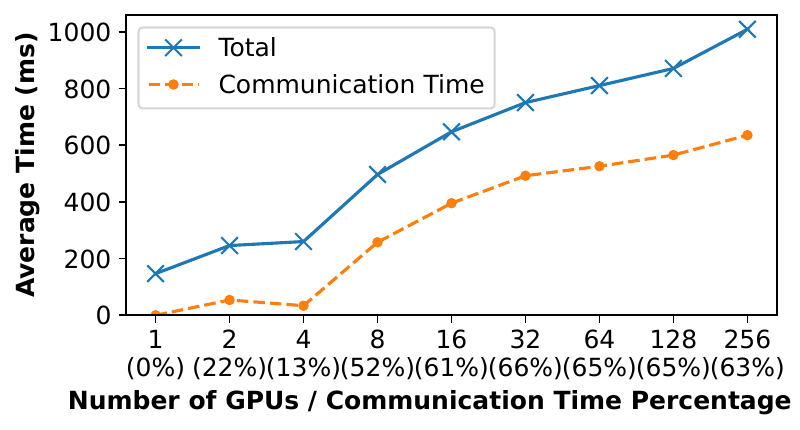}
    \caption{Simulation time breakdown: The average communication time and its percentage in the average total simulation time of \Sys.}
    \label{fig:comm-mean}
\end{figure}

\begin{figure}
\small
\begin{minipage}{0.55\linewidth}
\includegraphics[width=\linewidth]{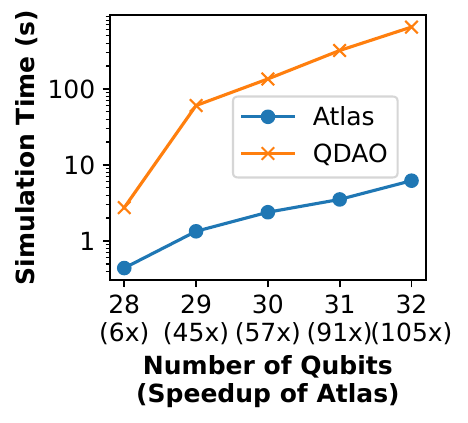}
\caption{
\sys{} outperforms \qdao{} by 61$\times$ on average. 
Log-scale simulation time  (single GPU) with DRAM offloading for \tcd{qft} circuits.}
\label{fig:scalability-qdao}
\end{minipage}
\hfill
\begin{minipage}{0.40333\linewidth}
\includegraphics[width=\linewidth]{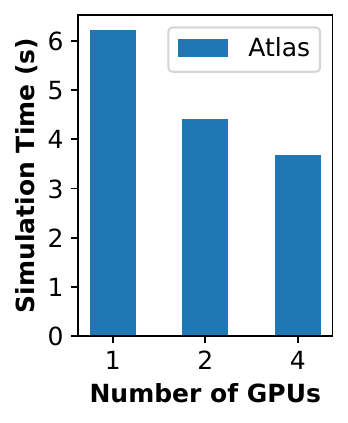}
\caption{DRAM offloading scales. \Sys simulation time of a 32-qubit \tcd{qft} circuit with 1, 2, and 4 GPUs.
}
\label{fig:scalability-atlas}
\end{minipage}
\end{figure}

\begin{figure}
    \centering
\includegraphics[width=0.96\linewidth]{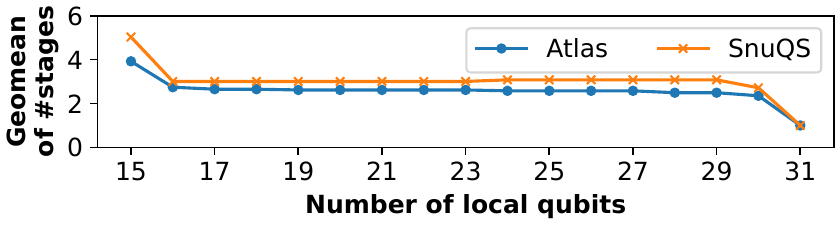}
\caption{Number of stages, \sys{} versus \snq{}: The geometric mean over all our benchmark circuits with 31 qubits.}
\label{fig:ilp_31}
\end{figure}

\begin{figure}
\centering
\includegraphics[width=0.99\linewidth]{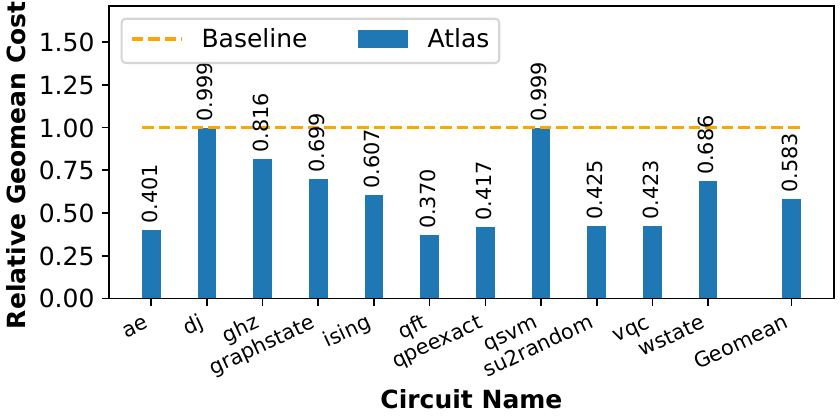}
\caption{
Kernelization effectiveness: The relative geometric mean cost of \kernelizeplain compared to greedy packing up to 5 qubits.
}
\label{fig:dp-circuit-geomean}
\end{figure}

\Cref{fig:comm-mean} breaks down the total simulation time into communication time (including intra-node and inter-node communications) and computation time.
We take the average of the 11 benchmark circuits for each number of GPUs. Inter-node communication is much more expensive than intra-node communication, so the computation is dominant when there is only one node (at most 4 GPUs). When the number of nodes increases, inter-node communication becomes dominant.
We note that the communication time takes a greater portion with 2 GPUs than with 4 GPUs, because each pair of the 4 GPUs in the node is connected. The total bandwidth of each GPU increases by 3$\times$ when the number of GPUs available increases from 2 to 4, so intra-node communication is faster with 4 GPUs than with 2 GPUs.


\subsection{DRAM Offloading}
\label{subsec:scalability}
In addition to improving quantum circuit simulation performance, another key advantage of \Sys over existing systems is its ability to scale to larger circuits that go beyond the GPU memory capacity.
As described in \Cref{sec:atlas}, \Sys does not require that the entire state vector of a quantum circuit be stored in GPU memory and uses DRAM offloading to support larger circuits.
\qdao{}~\cite{zhao2023full} also uses DRAM offloading to support larger circuits and scales to 32 qubits. We compare \Sys with \qdao{} with the \qis{} backend. We use 28 local qubits, and all remaining qubits are regional. For \qdao{}, we set $m=28$, and $t=19$ which runs the fastest.

\Cref{fig:scalability-qdao} shows the simulation performance of \tcd{qft} circuits with different numbers of qubits on a single GPU.
For a 28-qubit circuit, the GPU memory is sufficient, so both \Sys and \qdao are fast; for circuits with 29 to 32 qubits, 
\Sys runs 74$\times$ faster than \qdao{} on average,
and also scales better. The speedups are also achieved by both the \stage and the \kernelize algorithms.

\Cref{fig:scalability-atlas} shows the simulation performance of the 32-qubit \tcd{qft} circuit on 1, 2, and 4 GPUs. \Sys scales across multiple GPUs. The simulation time of \qdao{} stays the same when the number of GPUs increases.




\subsection{Circuit Staging}
We further compare the ILP-based circuit staging algorithm with heuristic-based approaches in existing simulators. 
In particular, we use the heuristics used in \snq{} as a baseline, which greedily selects the qubits with more gates operating on non-local gates to form a stage and uses the number of total gates as a tiebreaker~\cite{park_snuqs_2022}.
Other works did not describe how the heuristics were implemented.

Similar to \Cref{subsec:end-to-end}, we set at most 2 non-local qubits to be regional, and all other non-local qubits to be global.
\Cref{fig:ilp_31} shows the geometric mean number of stages for all 11 circuits with 31 qubits in \Cref{tab:benchmarks}.
Our circuit staging algorithm is guaranteed to return the minimum number of stages by \Cref{thm:ilp-optimal}, and it always outperforms \snq{}' approach. Note that \snq{} may give a worse circuit partition for the same circuit when the number of local qubits increases (from 23 to 24 in \Cref{fig:ilp_31}
), but this is guaranteed not to happen in our ILP-based approach.

\subsection{Circuit Kernelization}
\label{subsec:eval-dp}
We evaluate our \kernelize algorithm and compare it with a baseline that greedily packs gates into fusion kernels of up to 5 qubits, the most cost-efficient kernel size in the cost function used in \Cref{subsec:eval-setup}.

Each circuit family exhibits a pattern, so we take the geometric mean of the cost for 9 circuits with the number of qubits from 28 to 36 for each circuit family. 
\Cref{fig:dp-circuit-geomean} shows that the greedy baseline performs well in \tcd{dj} and \tcd{qsvm} circuits, but it does not generalize to other circuits.
The \kernelize algorithm is able to exploit the pattern for each circuit and find a low-cost kernel sequence accordingly.
\if 0
\subsection{Preprocessing Time}

\begin{wraptable}{r}[1pt]{1.5in}
\centering
\small
\begin{tabular}{l|l}
\toprule
\textbf{Name} & Time (s)\\
\midrule[2pt]
ae & 10.55   \\
graphstate &   1.81\\
ghz &  2.39\\
qft &  6.94 \\
qpeexact & 7.85\\
su2random & 57.48\\
\bottomrule
\end{tabular}%
\end{wraptable}
To speedup simulation time, we first preprocess each circuit to
partition it into stages by using our ILP-based technique, and then
kernelize it using our dynamic-programming based kernelizer.
This preprocessing step operation needs to be done once per circuit
and is relatively quick.  The table on the right shows the
preprocessing times for our benchmarks with the largest number of
qubits (34 qubits).  The timest are all under one minute.

\fi 

\section{Related Work}
\paragraph{Distributed quantum circuit simulators}
Recent work has introduced a number of state-vector-based quantum circuit simulators that
avail parallelism processing elements on the state vector in a data-parallel fashion~\cite{smelyanskiy2016qhipster, zulehner2018advanced,chen201864,wu2019full,chen2021hyquas,park_snuqs_2022,zhao2023full,zhang_uniq_2022}.
Even though the state vector can be trivially partitioned, applying quantum gates on it can easily result in a large amount of communication (e.g., between memory and processors and between different nodes).
Researchers have therefore proposed circuit partitioning techniques
that coalesce computations that operate on a smaller subset of the
state space as a single partition so as to reduce communication
costs~\cite{efthymiou_qibo_2020,park_snuqs_2022,chen2021hyquas,suzuki2021qulacs}.
An important limitation of all this prior work is that they use
heuristic techniques to partition the circuits, leading to suboptimal
behavior.
We formulate this circuit partitioning problem as an
integer linear programming (ILP) problem, which can be solved by
applying existing solvers optimally.

\paragraph{Quantum gate fusion} Prior work has developed gate
fusion techniques that use various heuristics to fuse multiple gates that operate
on nearby qubits into a single larger gate which can then be applied at
once~\cite{qiskit2019,suzuki2021qulacs,chen2021hyquas,westrick2024grafeyn}.
For example, Qulacs either leaves it to the user or applies the ``light'' or ``heavy'' approaches to merging gates~\cite{suzuki2021qulacs}.
In this paper, we formulate the circuit kernelization problem of grouping gates into kernels and propose a dynamic programming algorithm to systematically solve this problem.

%
\if 0
Full-state simulators return the complete amplitude vector of all
possible ($2^n$ for $n$-qubit systems) states~\cite{smelyanskiy2016qhipster, zulehner2018advanced,chen201864,wu2019full,chen2021hyquas,park_snuqs_2022}.
The sampling simulators do not produce the complete state vector but
instead sample from it, consistently with the underlying distribution~\cite{boixo2017simulation,pednault2017pareto,li2019quantum,liu2021closing,pan2022solving}.

Sampling simulators typically employ tensor contraction and Feynman
sum techniques, which avoid representing the complete state of the
quantum computation, for the sake of sampling a specific state from
the output.
Full-state simulators typically employ the state vector, (a.k.a.,
Schr\"odinger style), simulation method, where the simulation
calculates the complete state vector consisting of amplitudes of all
states.

This paper focuses on full-state simulation, which avails parallelism in several forms of granularity.
A simple form of parallelism is at the level of the
states: fundamentally, all full-state simulators operate by applying a
gate to an exponential number of states in a data-parallel fashion.
Even though taking advantage of this data parallelism seems
straightforward, the reality is (perhaps as would be expected from
all things parallel) more involved: even though the operations are
naturally parallel, performing them efficiently requires minimizing
communication, because the operations could otherwise lead to a large
amount of communication (e.g., between memory and processors/GPUs and
between different nodes).
Researchers have therefore proposed circuit partitioning techniques
that coalesce computations that operate on a smaller subset of the
state space as a single partition so as to reduce communication
costs~\cite{park_snuqs_2022,chen2021hyquas,suzuki2021qulacs}.
An important limitation of all this prior work is that they use
heuristic techniques to partition the circuits, leading to suboptimal
behavior.

In this paper, we formulate this circuit partitioning problem as an
integer linear programming (ILP) problem, which can be solved by
applying existing solvers optimally.
As a result, we are able to create optimal partitions that minimize
communication.
We also show that the partitioning technique can be applied at
multiple levels of granularity, for example, once to minimize
communication costs between different nodes in the hardware, and
another time to minimize communication between different GPUs within
the same node.
Our partitioning techniques, therefore, provide a uniform method to minimize communication across multiple levels of the architecture.

The second form of parallelism is a form of control-flow parallelism
that arise from the fact that multiple gates may be applied to the
state in parallel.
This form of parallelism, however, is rather complex to take advantage
of because even if two gates may apply to different qubits and may
therefore appear parallel, in reality, they operate on the complete
state set and may interfere: for example, they may output the same
state, whose amplitudes must be combined.
To exploit this form of parallelism, prior work has developed gate
fusion techniques that use various heuristics to fuse multiple gates that operate
on nearby qubits into a single larger gate which can then be applied at
once~\cite{qiskit2019,suzuki2021qulacs,chen2021hyquas}.
%
%
In this paper, we formulate the circuit kernelization problem of partitioning a circuit into kernels executed sequentially on each GPU, and propose a dynamic-programming algorithm to solve this problem.
\fi

\if 0
Based on our techniques, we build a quantum circuit simulator that can
run on clusters of GPUs.
Our simulator performs applies all of our techniques, including ILP-based partitioning and DP-based kernelization as a pre-processing step, prior to simulation, in an offline fashion, and
then simulates the staged and kernelized circuit.
There has been much research on using GPUs to perform quantum circuit
simulation~\cite{amariutei2011parallel, avila2014gpu,
  avila2016optimizing, qiskit2019, gutierrez2010quantum,
  jones2019quest, zhang2015quantum,suzuki2021qulacs}.
There has also been research that focus on using multicore computers
with threading and vectorization, e.g., IQS (Intel Quantum Simulator,
formerly qHiPSTER)~\cite{smelyanskiy2016qhipster}.
\fi 

\paragraph{Domain-specific quantum circuit simulators}
This paper focuses on the general quantum circuit simulation problem and makes no assumption about input circuits.
There has been significant research on developing simulators
for specific classes of quantum
circuits~\cite{jozsa2003,aharonov2006quantum,gottesman1998heisenberg,markov2008simulating,zhao2022q,zhang_uniq_2022,lykov_fast_2023,lykov_tensor_2022}.

In addition, our work considers an idealized quantum computing environment
and does not attempt to simulate errors experienced by modern, NISQ-era
quantum computers.
There has also been work on developing error-aware quantum circuit simulators~\cite{trieu2010large,li2020density}.
Error simulation is more expensive, because of the need to track errors.
\section{Limitation}
Although our techniques can improve run time, the running time of algorithms \stage{} and \kernelize{} may depend on the circuit structure, the circuit size, and the ILP solver used.
For example, \Cref{thm:dp-complexity} establishes an exponential upper bound in the number of qubits for the running time of \kernelize.
But we do not know if this upper bound is tight: our algorithms only take a few seconds on average to preprocess each circuit in our evaluation.
We believe that improving the scalability of algorithms \stage and \kernelize is an important problem and will need to be revisited when quantum circuit simulators are able to handle larger circuits.
Future work could narrow the gap between theoretical time complexity and practice and explore the trade-off between preprocessing and simulation.

\section{Conclusion}


In this paper, we propose
an ILP-based algorithm \stage{} that minimizes the communication cost between GPUs, and a dynamic programming algorithm, \kernelize{}, that ensures efficient kernelization of the GPU work.
We then present an implementation of a distributed, multi-GPU quantum circuit simulator, called \Sys, that realizes the proposed algorithms.
%
%
Previous work also partitioned the circuit for improved communication and kernelization, but relied on heuristics to determine the partitioning.
We show that it is possible to achieve substantial improvements by using provable algorithms that can partition the circuit 
hierarchically.

\section*{Acknowledgement}
We thank the anonymous SC reviewers for their feedback on this work. This research is partially supported by NSF awards CNS-2147909, CNS-2211882, CNS-2239351, CCF-1901381,
CCF-2115104, CCF-2119352, and CCF-2107241 and research awards from Amazon, Cisco, Google, Meta, Oracle, Qualcomm, and Samsung.
This research used resources of the National Energy Research Scientific Computing Center (NERSC), a U.S. Department of Energy Office of Science User Facility located at Lawrence Berkeley National Laboratory, operated under Contract No. DE-AC02-05CH11231 using NERSC award DDR-ERCAP0023403.


\bibliographystyle{IEEEtran}
\bibliography{new}

\ifarxiv
\clearpage
\appendix
\section{Appendix}

\subsection{An Alternate Algorithm for Kernelization}
\label{app:dp-simple}
\begin{algorithm}
{
\small
\begin{algorithmic}[1]
\Function{$\Call{OrderedKernelize}{}$}{$\m{C}$}

\State {\bf Input:} A quantum circuit $\m{C}$ represented as a sequence of gates.
\State {\bf Output:} A sequence of kernels.
\State //  $DP[i]$ stores the minimum cost to kernelize the first $i$ gates and the corresponding kernel sequence.

\State $DP[0] = (0, [])$
\For {$i = 0$ \textbf{to} $|\m{C}|-1$} 
\State $DP[i+1] = \min_{0 \le j \le i}\{DP[j] + \Call{Cost}{\m{C}[j, \ldots,i]}\}$\label{line:dp-simple-add}
\EndFor
\State \Return $DP[|\m{C}|].kernels$
\EndFunction
\end{algorithmic}
}
\caption{The \dporderedplain Algorithm.
The ``+'' operator in line~\ref{line:dp-simple-add} means adding the cost and appending the kernel to the sequence.
}
\label{alg:dp-simple}
\end{algorithm}

We present \Cref{alg:dp-simple}, an alternate dynamic programming (DP) algorithm that also solves the
optimal circuit kernelization problem, considering only kernels with contiguous gate segments.
The algorithm, called \dpordered{}, takes the circuit
representation as a sequence of gates and outputs a sequence of kernels.
The variable \tcd{DP} maps a gate (position) in the circuit to a pair
of the total cost for the circuit prefix up to that gate (excluding
the gate) and the sequence of kernels for that prefix.
The algorithm starts with a total cost of zero and an empty kernel
sequence.
It then considers each possible gate $\m{C}[i]$ in increasing order of $i$ and
computes for that gate the optimal cost.
To compute the optimal cost at a position, \dpordered{} examines the cost of all possible kernels ending at that position and takes the minimum.
Because for each gate, the algorithm considers all possible kernels
ending at that gate, its run-time is asymptotically bounded by the
square of the length of the circuit, i.e., $O(|\m{C}|^2)$.

Although \dpordered{} is easier to implement, it has a significant limitation: It only considers a specific sequential ordering of the circuit.
It is a priori difficult to determine an ordering that yields
the lowest cost. For example, 
the optimal kernel sequence for \Cref{fig:dp_with_optimization} might be $\{g_0, g_8, g_9\}, \{g_1, g_2, g_4\}, \{g_3\}, \{g_5, g_6, g_7\}$. \dpordered{} cannot find this kernel sequence if it is not provided with the corresponding ordering of the circuit. 
We show that the \kernelize algorithm is better than \dpordered{} both theoretically (\Cref{thm:dp-optimal}) and empirically (\Cref{fig:dp-pruning}).

\begin{theorem}[The \kernelize algorithm is at least as good as \dpordered{}]
\Cref{alg:dp} always returns a kernel sequence with a total execution cost no more than \Cref{alg:dp-simple}, which is optimal for \Cref{problem:kernelization}.
\label{thm:dp-optimal}
\end{theorem}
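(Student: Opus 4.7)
The plan is to show that the entire search space explored by \dporderedplain is a subset of that explored by \kernelizeplain, and with matching cost accounting on identical kernel sequences. Then the minimum cost reported by \Cref{alg:dp} is at most the minimum reported by \Cref{alg:dp-simple}.

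First, I would fix any output of \dporderedplain, say $\m{K}_0,\m{K}_1,\dots,\m{K}_{b-1}$, corresponding to cut points $0 = i_0 < i_1 < \cdots < i_b = |\m{C}|$ with $\m{K}_j = \{\m{C}[i_j],\dots,\m{C}[i_{j+1}-1]\}$. By \Cref{thm:contiguous}, every such contiguous $\m{K}_j$ satisfies \Cref{dp-constraint}, so each is a legal kernel for \Cref{alg:dp}. Moreover, since every gate strictly outside $\m{K}_j$ lies either entirely before or entirely after the segment, the monotonicity clause of \Cref{dp-constraint} applies trivially to $\m{K}_j$.

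Second, I would exhibit a trajectory through the DP of \Cref{alg:dp} that realises exactly the cost $\sum_j \Call{Cost}{\m{K}_j}$. Starting from $DP[0,\emptyset] = (0,[])$, I would maintain the invariant that at position $i$ with $i_j \le i < i_{j+1}$, the kernel set under consideration is $\KS{}_i = \{\kr{\m{K}_j}{i}\}$ (a single growing contiguous kernel). The transitions are:
\begin{itemize}
\item If $i = i_j$, gate $\m{C}[i]$ starts a fresh kernel $\{\m{C}[i]\}$. Apply line~\ref{dp:singleton-kernel} with $\KS{}' = \KS{}_i \cup \{\m{K}_{j-1}\}$ (for $j=0$, with $\KS{}' = \emptyset$); this closes out $\m{K}_{j-1}$, adding exactly $\Call{Cost}{\m{K}_{j-1}}$ to the accumulated total.
\item If $i_j < i < i_{j+1}$, gate $\m{C}[i]$ extends the current kernel. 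Apply line~\ref{dp:add-directly}: because monotonicity applies to $\m{K}_j$ as noted, the ordering is maintained in place and $\KS{}_{i+1} = \{\kr{\m{K}_j}{i+1}\}$.
\end{itemize}
Finally, at $i = |\m{C}|$ the kernel set is $\{\m{K}_{b-1}\}$, and the closing $\min$ in line~\ref{dp:compute_f_best} adds $\Call{Cost}{\m{K}_{b-1}}$. The cumulative cost along this trajectory is exactly $\sum_{j=0}^{b-1} \Call{Cost}{\m{K}_j}$, which is the cost \dporderedplain assigns to this same sequence.

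Third, since \Cref{alg:dp} takes minima at every transition, the value it returns is no larger than the cost along any particular trajectory, in particular the one simulating the optimum of \dporderedplain. Hence \kernelizeplain's output cost is at most \dporderedplain's. The optimality of \dporderedplain within its own restricted search (contiguous partitions) is immediate from a standard induction on $i$ over line~\ref{line:dp-simple-add} of \Cref{alg:dp-simple}, which enumerates every possible contiguous suffix kernel and minimises.

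The main subtlety will be checking the bookkeeping for line~\ref{dp:singleton-kernel}: one must verify that $\tilde{\KS{}}$ is indeed a suffix of $\KS{}'$ (as required by \Cref{alg:dp}) along the constructed trajectory, and that picking $\KS{}' = \{\m{K}_{j-1}\} \cup \tilde{\KS{}}$ is valid. Since $\tilde{\KS{}} = \emptyset$ at each such step in our simulation (the old kernel is the only element of $\KS{}'$ and is being removed), the suffix condition holds trivially, and the transition legally charges $\Call{Cost}{\m{K}_{j-1}}$. Everything else is routine.
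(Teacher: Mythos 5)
Your proposal is correct and follows essentially the same route as the paper's proof: both exhibit a sequence of \kernelizeplain{} DP transitions that simulates the optimal contiguous kernelization of \dporderedplain{} (legal by \Cref{thm:contiguous}, with the previous kernel closed out via line~\ref{dp:singleton-kernel} using $\tilde{\KS{}}=\emptyset$ and the last kernel charged at line~\ref{dp:compute_f_best}), and then invoke the minima taken by the DP. The paper phrases this as an induction on $i$ with an invariant on $DP[i+1,\KS{}_{i+1}]$, but the witness it constructs is exactly your single-growing-kernel trajectory, so the arguments coincide up to minor bookkeeping (your indexing of $\KS{}_i$ and the remark about monotonicity ``applying'' are immaterial since the maintained kernel set is a singleton).
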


\begin{proof}
Let $\m{K}_0,\dots,\m{K}_{b-1}$ be the kernel sequence that minimizes the total execution cost such that each kernel consists of a contiguous segment of the gate sequence $\m{C}$. It suffices to prove that \Sys kernelization algorithm returns a kernel sequence with a total execution cost no more than \Cref{eqn:dp_objective}.

For any $0 \le i < |\m{C}|$, suppose $\m{C}[i] \in \m{K}_t$ ($0 \le t < b$).
We prove by induction on $i$ that for any $0 \le i < |\m{C}|$, there is a kernel set $\KS{}_{i+1}$ such that
\begin{enumerate}
    \item $\{\kr{\m{K}_t}{(i+1)}\} \subseteq\KS{}_{i+1}\subseteq \{\m{K}_0, \dots, \m{K}_{t-1}, \kr{\m{K}_t}{(i+1)}\}$
    \item and after the iteration $i$ (0-indexed) of the loop in line~\ref{dp:loop_i} of \Cref{alg:dp}, $DP[i+1, \KS{}_{i+1}]$ has a cost no more than $\Call{Cost}{\{\m{K}_0, \dots, \m{K}_{t-1}, \kr{\m{K}_t}{(i+1)}\}\setminus\KS{}_{i+1}}$.
\end{enumerate}

When $i=0$, line~\ref{dp:singleton-kernel} transits from $DP[0,\emptyset] = (0, [])$ to $DP[1,\{\m{C}[0]\}] = (0, [])$ where $\KS{}_1=\KS{}=\{\m{C}[0]\}, \KS{}'=\tilde{\KS{}}=\emptyset$, so the induction hypothesis holds.

When $i>0$, $\kr{\m{K}_t}{(i+1)}$ satisfies \Cref{dp-constraint} by \Cref{thm:contiguous}. If $|\kr{\m{K}_t}{(i+1)}| > 1$, we apply the induction hypothesis on $i-1$ and it holds for $i$ because line~\ref{dp:add-directly} sets $DP[i+1,\KS{}_{i+1}]$ to $DP[i,\KS{}_i]$ in the iteration where $\KS{}_{i+1}=\KS{}=\KS{}_i \setminus \{\kr{\m{K}_t}{i}\} \cup \{\kr{\m{K}_t}{(i+1)}\}$. If $|\kr{\m{K}_t}{(i+1)}| = 1$, consider the transition in line~\ref{dp:singleton-kernel} in the iteration where $\KS{}_{i+1}=\KS{}=\{\kr{\m{K}_t}{(i+1)}\}$. After taking the minimum when $\KS{}' = \KS{}_i$,
\begin{equation*}
\begin{aligned}
&DP[i+1, \KS{}_{i+1}].cost \\
\le\ & DP[i, \KS{}_i].cost + \Call{Cost}{\KS{}_i} \\
\le\ & \Call{Cost}{\{\m{K}_0, \dots, \m{K}_{t-2}, \kr{\m{K}_{t-1}}{i}\}\setminus\KS{}_i} + \Call{Cost}{\KS{}_i} \\
=\ & \Call{Cost}{\{\m{K}_0, \dots, \m{K}_{t-1}\}} \\
=\ & \Call{Cost}{\{\m{K}_0, \dots, \m{K}_{t-1}, \kr{\m{K}_t}{(i+1)}\}\setminus \KS{}_{i+1}}.
\end{aligned}
\end{equation*}

So after the last iteration of the loop in line~\ref{dp:loop_i}, there is a kernel set $\KS{}_{|\m{C}|}$ such that $DP[|\m{C}|, \KS{}_{|\m{C}|}]$ has a cost no more than $\Call{Cost}{\{\m{K}_0, \dots, \m{K}_{b-1}\}\setminus\KS{}_{|\m{C}|}}$, which evaluates to 
$$
\Call{Cost}{\{\m{K}_0, \dots, \m{K}_{b-1}\}}=\sum_{i=0}^{b-1} \Call{Cost}{\m{K}_i}
$$
and is taken minimum with the returned kernel sequence on line~\ref{dp:compute_f_best}.
\end{proof}




\subsection{Optimizations to the \kernelizeplain 
Algorithm}
\label{subsec:dp_optimizations}
This section describes the optimizations when implementing \Cref{alg:dp} with \Cref{alg:extensible}.

\paragraph{\Kwins{} qubits}
Recall that the \Sys circuit staging algorithm allows mapping the insular qubits of a gate to regional and global qubits. 
Before loading a batch of states to GPU, the values of all global and regional qubits for the batch are known, which are leveraged by \Sys to simplify the computation.
Specifically, for gates with \kwins{} non-local qubits, \Sys retrieves the non-local qubits' values and converts these gates into local gates with fewer qubits.

We can do more along this line for shared-memory kernels. A physical qubit is \defn{active} for a shared-memory kernel if its states are loaded into the shared memory in a single micro-batch. 
For \emph{non-active} local qubits, \Sys applies a similar optimization to gates with \kwins{} non-active local qubits that converts these gates into smaller gates with only active qubits.
Because we can swap two adjacent gates if all qubits they share are \kwins{} to both gates, we can slightly lift the \Cref{dp-constraint} by ignoring any qubit in line~\ref{extensible:intersect} of \Cref{alg:extensible} that is \kwins{} to $\m{C}[i]$ and all gates in $\m{K}'$. In order to know this, we additionally maintain an active qubit set for each shared-memory kernel $\m{K}'$ for the \kwnonins{} qubits of all gates in $\m{K}'$.

Furthermore, we additionally maintain an extensible \kwins{} qubit set that is maintained in the same way as the extensible qubit set in line~\ref{extensible:remove-qubits} of \Cref{alg:extensible} but initialized as 
$\Call{ExtensibleInsularQubits}{\m{K}',i+1} = \Call{AllQubits}{} \setminus \Call{Qubits}{\m{C}[i]}$ in line~\ref{extensible:qubits-k} of \Cref{alg:extensible}. We can then further lift the \Cref{dp-constraint} by allowing to add a gate $\m{C}[i]$ to a shared-memory kernel $\m{K}'$ as long as its \kwins{} qubits are in the extensible \kwins{} qubit set and its \kwnonins{} qubits are in the extensible qubit set.

\begin{figure}
    \centering
    \includegraphics[scale=0.5]{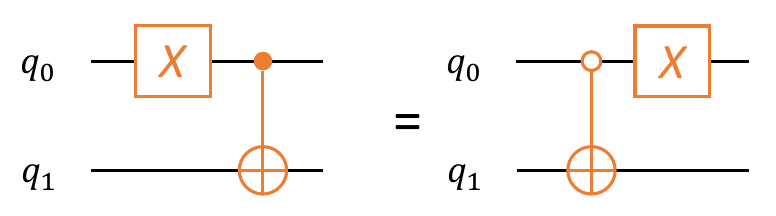}
    \caption{If we swap an $X$ gate with an adjacent $CX$ gate, we need to flip the control state of the new $CX$ gate to $\ket{0}$.}
    \label{fig:swap}
\end{figure}

After the DP algorithm, we reconstruct the topological order to see which gates are swapped. If one of the gates being swapped is anti-diagonal (e.g., the $X$ gate) and the other gate is a control gate, \Sys flips the control state of that gate, as shown in \Cref{fig:swap}.

This optimization opens up more ways to reorder the circuit sequence.

\paragraph{Optimizing DP transition}
When a new gate $\m{C}[i]$ can be subsumed by an existing kernel $\m{K}$ or vice versa, i.e., $\Call{Qubits}{\m{C}[i]} \subseteq \Call{Qubits}{\m{K}}$ or $\Call{Qubits}{\m{K}} \subseteq \Call{Qubits}{\m{C}[i]}$ (and $\Call{Qubits}{\m{C}[i]} \subseteq \Call{\extq}{\m{K}}$ by \Cref{dp-constraint}),
\Sys directly adds $\m{C}[i]$ into $\m{K}$. In other words, \Sys will not consider any $\KS{}$ where $\m{C}[i] \in \m{K}' \in \KS{}$ for any $\m{K}' \neq \m{K}$.
For example, in \Cref{fig:dp_with_optimization}, \Sys will directly add $g_7$ into $\m{K}_4$ without considering other options.
This optimization greatly reduces the amount of DP transitions \Sys must explore.

\paragraph{Deferring adding gates to extensible kernels}
To further reduce the number of DP states, we keep gates as single-gate kernels when their extensible qubit sets are all qubits instead of trying to add them to each of the kernels. Upon running line~\ref{extensible:qubits-k} of \Cref{alg:extensible} for $\m{K}'$, we search for all other kernels $\m{K}''$ with $\Call{\extq}{\m{K}'', i}=\Call{AllQubits}{}$ to be merged together with $\m{K}'$ at that time.

For example, after adding $g_7$ into $\m{K}_4$ in \Cref{fig:dp_with_optimization}, we create a singleton kernel $\m{K}_5$ for $g_8$. In the next iteration, we find that both $\m{K}_1$ and $\m{K}_5$ are subsumed by $g_9$, so we pick one arbitrarily to add $g_9$ into it. Suppose we add $g_9$ into $\m{K}_5$, then we will run line~\ref{extensible:qubits-k} of \Cref{alg:extensible} for $\m{K}_1$. Before running this line, we search for all other kernels with the extensible qubit set being all qubits, i.e., $\m{K}_4$ and $\m{K}_5$. So we will either leave $\m{K}_1$ alone or merge it with $\m{K}_4$ or $\m{K}_5$, resulting in different $\KS{}$'s. In this particular example, merging $\m{K}_1$ with $\m{K}_5$ will be the best option.

\paragraph{Attaching single-qubit gates}
Our preliminary experiments showed that the single-qubit gates of a quantum circuit could easily result in an exponential blow-up in the number of DP states because of the independence between single-qubit gates applied to different qubits. To address this issue, we attach each single-qubit gate to an adjacent multi-qubit gate. If the single-qubit gate is \kwnonins{}, we mark the corresponding qubit \kwnonins{} in the attached gate. For single-qubit gates that are only adjacent to other single-qubit gates, we attach them together to the closest multi-qubit gate.

\paragraph{Post-processing}
After the DP algorithm, \Sys needs to merge some kernels together in $\KS{}$ before taking the cost as a post-processing step. 
Computing the minimum cost for merging kernels is NP-hard because the bin packing problem can be reduced to this problem.
\Sys uses a simple approximation algorithm that greedily packs the kernels in $\KS{}$, running in quasilinear time: (1) for fusion kernels, \Sys packs them towards the most cost-efficient density; (2) for shared-memory kernels, \Sys packs them towards the maximum shared-memory kernel size.

\paragraph{Pruning DP states}
Another key optimization to reduce the DP states is {\em pruning}. A simple pruning based on the cost of the DP states would be suboptimal because we have not considered the cost of the kernels in $\KS{}$ which may contribute a lot.
During the DP, whenever the number of states at the same position $i$ in the circuit reaches a user-defined threshold $T$ at the beginning of an iteration (i.e., Line~\ref{dp:loop_i} in \Cref{alg:dp}), \Sys performs the post-processing step introduced above at the current position for all DP states and keeps $\frac{T}{2}$ DP states with the lowest cost after post-processing.
This is the only optimization that may worsen the results. 
A study of the effect of $T$ is included in \Cref{subsec:app-kernelizer}.
We set $T = 500$ in our evaluation, which achieves good performance.

\subsection{Detailed Results}

\subsubsection{Circuit Staging}
We also evaluated \Sys staging algorithm for circuits with more qubits, and the results are similar to \Cref{fig:ilp_31}. We present the 42-qubit results in \Cref{fig:ilp_42}.
\begin{figure}[h]
\includegraphics[width=0.99\linewidth]{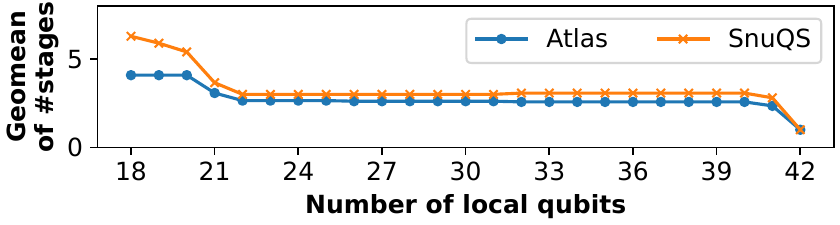}
\caption{Number of stages, \sys{} versus \snq{}: The geometric mean on 11 benchmark circuits with 42 qubits.}
\label{fig:ilp_42}
\end{figure}

\subsubsection{Circuit Kernelization}
\label{subsec:app-kernelizer}
\paragraph{Results for each individual circuit}
\Cref{fig:dp_ae,fig:dp_dj,fig:dp_ghz,fig:dp_graphstate,fig:dp_ising,fig:dp_qft,fig:dp_qpeexact,fig:dp_qsvm,fig:dp_su2random,fig:dp_vqc,fig:dp_wstate} show the resulting cost of \Syss kernelizer for each circuit compared against a baseline greedily packing gates into fusion kernels of up to 5 qubits. ``\Sys-Naive'' denotes \dpordered{}.
\Cref{fig:dp_time_ae,fig:dp_time_dj,fig:dp_time_ghz,fig:dp_time_graphstate,fig:dp_time_ising,fig:dp_time_qft,fig:dp_time_qpeexact,fig:dp_time_qsvm,fig:dp_time_su2random,fig:dp_time_vqc,fig:dp_time_wstate} show the preprocessing time of them.

\paragraph{Case study on a circuit with many gates}
We additionally present a case study where the number of gates is far larger than the number of qubits. We take the \tcd{hhl} circuit in NWQBench~\cite{li2021qasmbench} with 4, 7, 9, and 10 qubits, and pad each circuit to 28 qubits to restrict the kernelization algorithms to use GPU instead of computing the matrix for a 10-qubit gate directly. The number of gates is given in \Cref{tab:hhl_gates}.
\begin{table}[ht]
\caption{\label{tab:dp-geomean}The number of gates in the \tcd{hhl} circuit.}
\label{tab:hhl_gates}
\centering
\small
\begin{tabular}{r|r}
\toprule
\textbf{Number of qubits} & \textbf{Number of gates} \\
\midrule
4 & 80 \\
7 & 689 \\
9 & 91,968 \\
10 & 186,795 \\
\bottomrule
\end{tabular}%
\end{table}

\Cref{fig:dp_hhl} shows the resulting cost and \Cref{fig:dp_time_hhl} shows the preprocessing time of the kernelization algorithms. \kernelize runs in linear time to the number of gates, and it is faster than \dpordered{} on each of these circuits while achieving the same resulting cost.

\begin{figure*}[bt]
\centering
\includegraphics[width=0.99\linewidth]{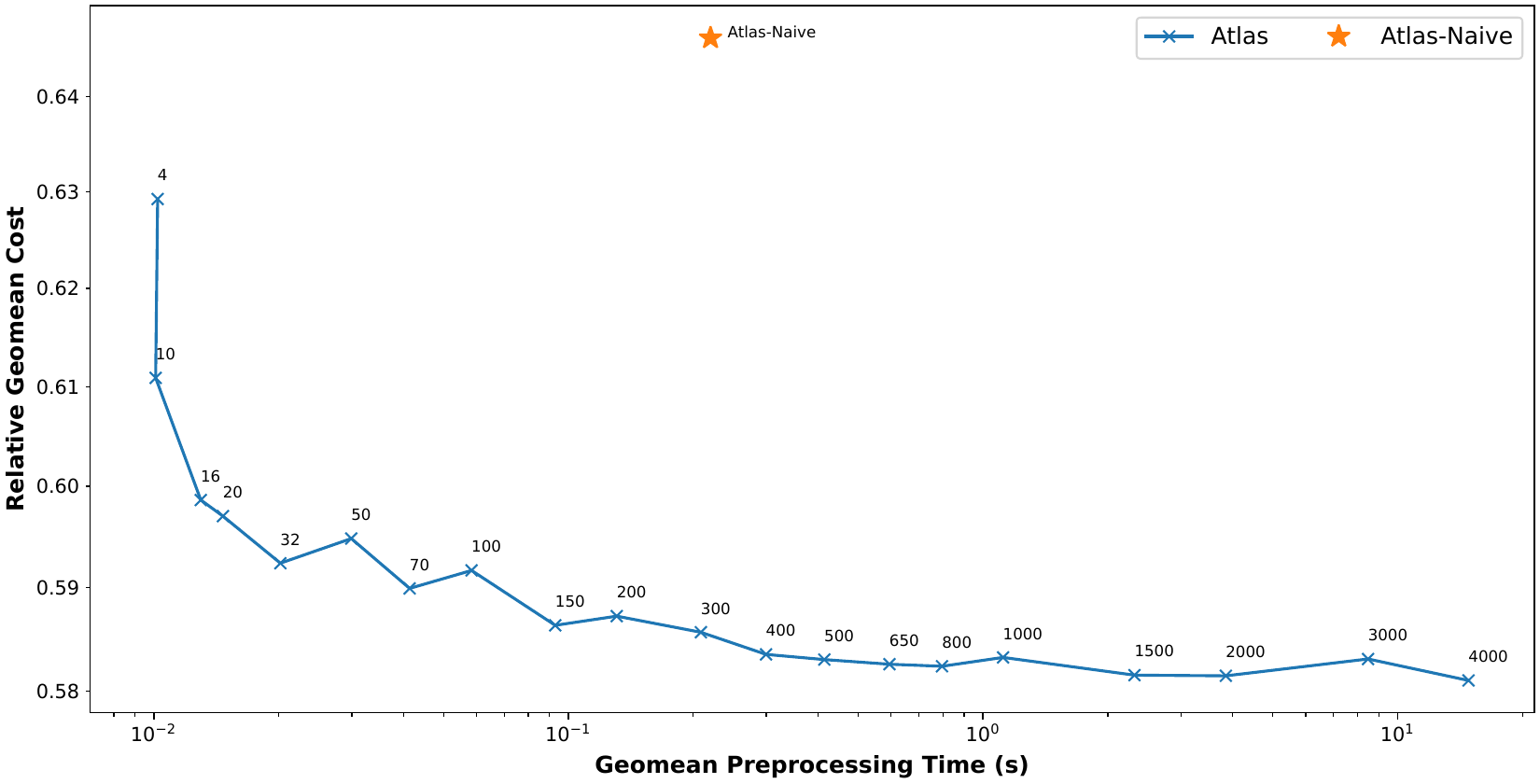}
\caption{The relative geometric mean cost among all 99 circuits in \Cref{tab:benchmarks} of \kernelizeplain compared to greedy packing up to 5 qubits, and the running time of the kernelization algorithm when the pruning threshold $T$ varies. Each data point is annotated with the pruning threshold $T$. ``\Sys'' denotes \kernelizeplain. ``\Sys-Naive'' denotes \dporderedplain.
}
\label{fig:dp-pruning}
\end{figure*}

\paragraph{Pruning threshold and running time of \kernelizeplain}
\label{app:dp-pruning}
\kernelize implements a pruning threshold $T$ in \Cref{subsec:dp_optimizations}.
As shown in \Cref{fig:dp-pruning}, the running time of \kernelize increases and the total execution cost decreases when $T$ increases. The cost reduction benefit diminishes quickly and the running time of \kernelize grows exponentially when $T$ increases exponentially; even when $T=4$, \kernelize achieves a lower total execution cost than \dpordered{} and also preprocesses faster. We choose $T=500$ in our evaluation as the geometric mean cost starts to flatten and the running time of \kernelize (3.9 seconds on average) is on the same order of magnitude as the ILP-based circuit staging algorithm (3.3 seconds on average). Because some large circuits take a much longer time to preprocess than small circuits, the average preprocessing time is longer than the geometric mean. $T=4,000$ or even higher would give us better results, but that would require a much longer circuit preprocessing time.

\begin{figure}[H]
\centering
\includegraphics[width=0.97\linewidth]{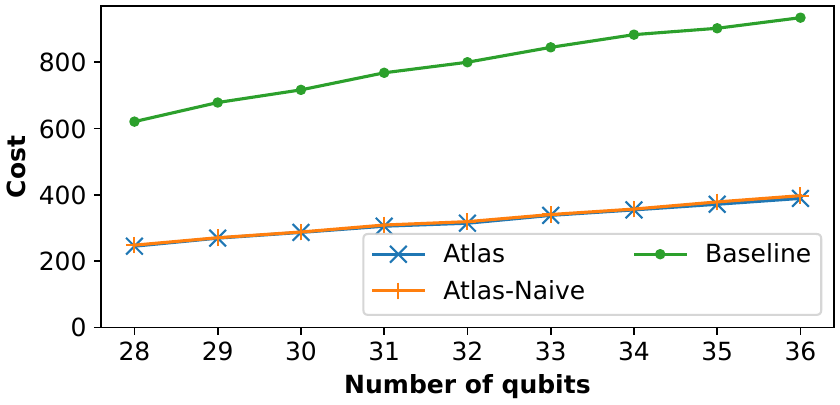}
\caption{The total execution cost of different kernelization algorithms on the circuit \tcd{ae}.}
\label{fig:dp_ae}
\end{figure}

\begin{figure}[H]
\centering
\includegraphics[width=0.97\linewidth]{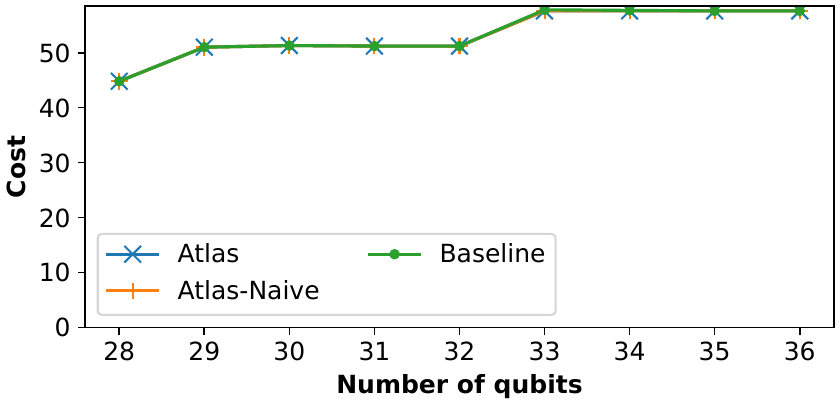}
\caption{The total execution cost of different kernelization algorithms on the circuit \tcd{dj}.}
\label{fig:dp_dj}
\end{figure}

\begin{figure}[H]
\centering
\includegraphics[width=0.97\linewidth]{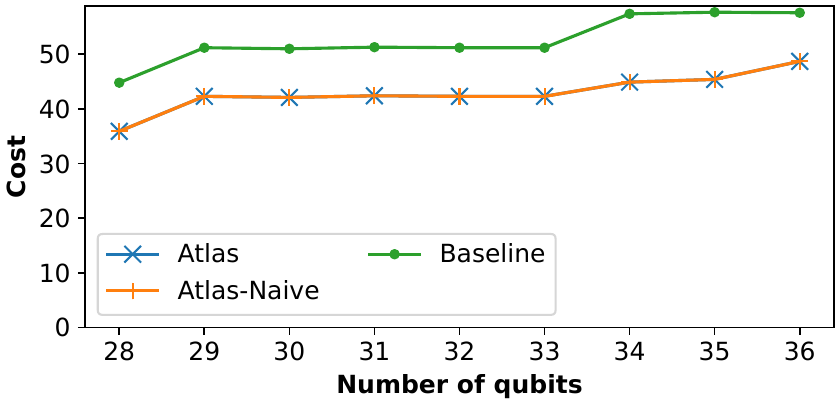}
\caption{The total execution cost of different kernelization algorithms on the circuit \tcd{ghz}.}
\label{fig:dp_ghz}
\end{figure}

\begin{figure}[H]
\centering
\includegraphics[width=0.97\linewidth]{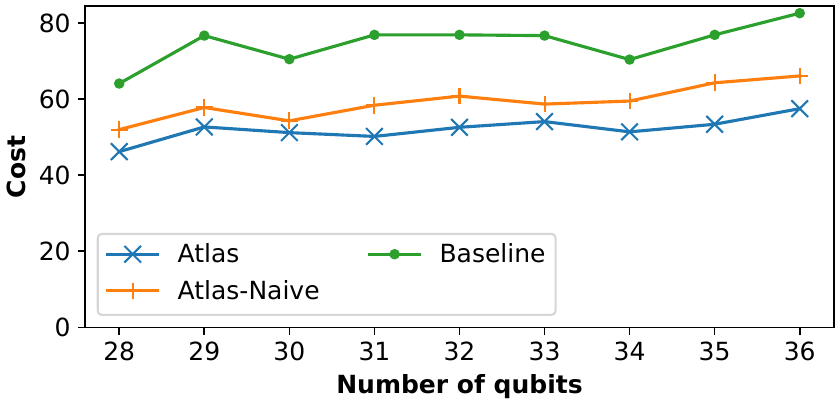}
\caption{The total execution cost of different kernelization algorithms on the circuit \tcd{graphstate}.}
\label{fig:dp_graphstate}
\end{figure}

\begin{figure}[H]
\centering
\includegraphics[width=0.97\linewidth]{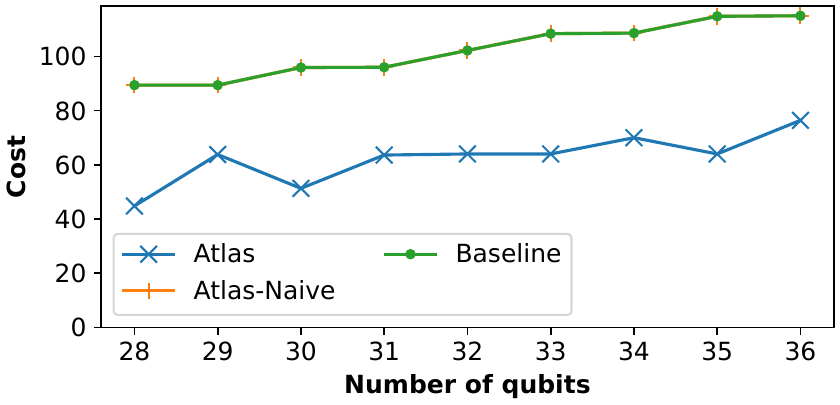}
\caption{The total execution cost of different kernelization algorithms on the circuit \tcd{ising}.}
\label{fig:dp_ising}
\end{figure}

\begin{figure}[H]
\centering
\includegraphics[width=0.97\linewidth]{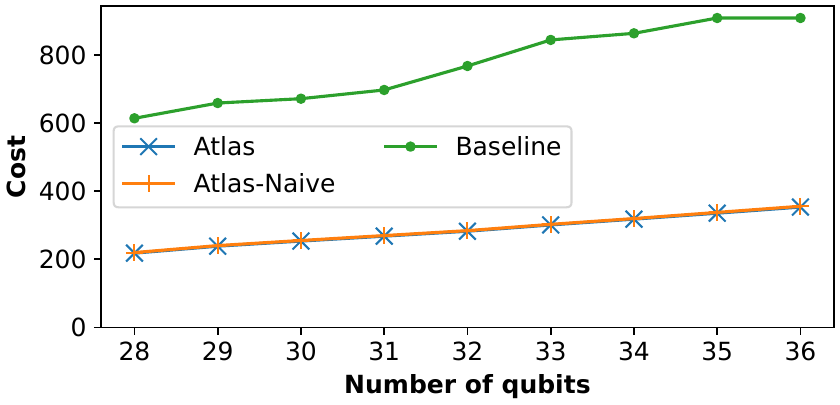}
\caption{The total execution cost of different kernelization algorithms on the circuit \tcd{qft}.}
\label{fig:dp_qft}
\end{figure}

\begin{figure}[H]
\centering
\includegraphics[width=0.97\linewidth]{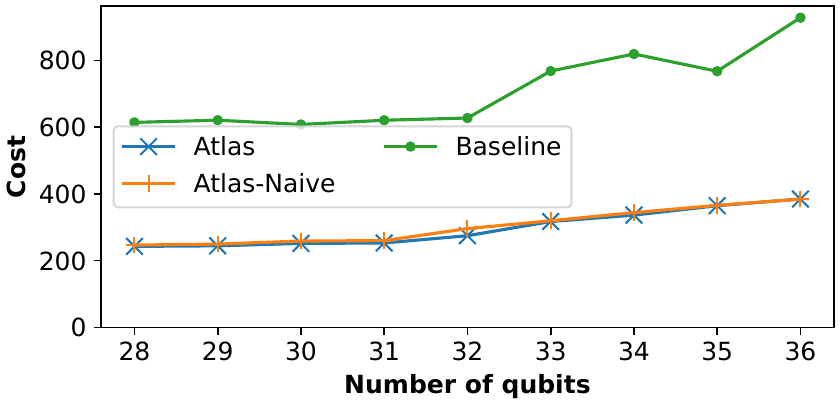}
\caption{The total execution cost of different kernelization algorithms on the circuit \tcd{qpeexact}.}
\label{fig:dp_qpeexact}
\end{figure}

\begin{figure}[H]
\centering
\includegraphics[width=0.97\linewidth]{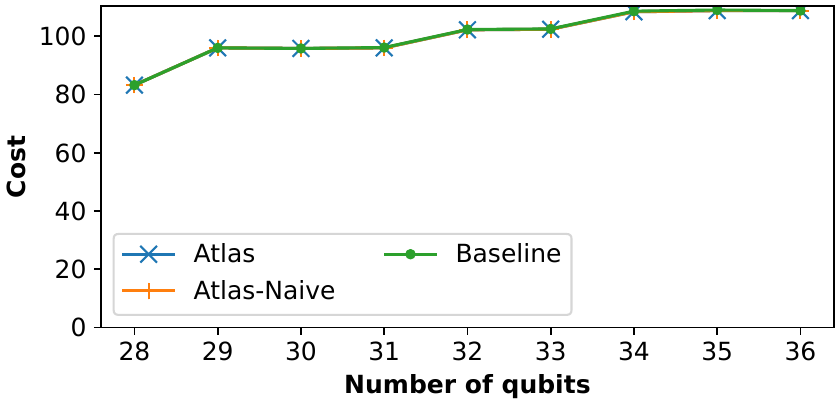}
\caption{The total execution cost of different kernelization algorithms on the circuit \tcd{qsvm}.}
\label{fig:dp_qsvm}
\end{figure}

\begin{figure}[H]
\centering
\includegraphics[width=0.97\linewidth]{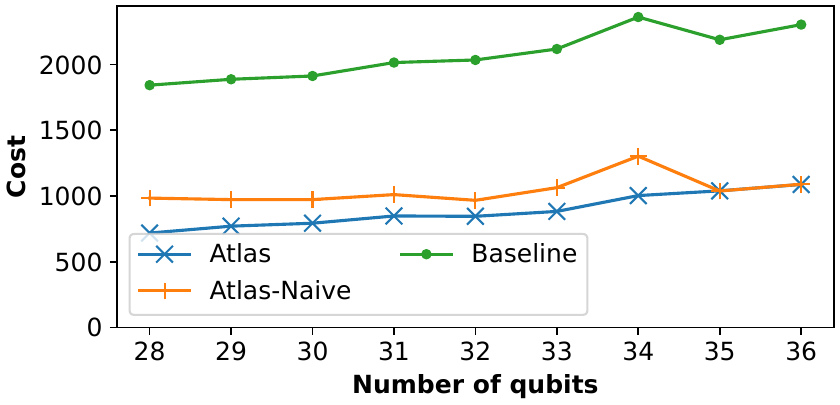}
\caption{The total execution cost of different kernelization algorithms on the circuit \tcd{su2random}.}
\label{fig:dp_su2random}
\end{figure}

\begin{figure}[H]
\centering
\includegraphics[width=0.97\linewidth]{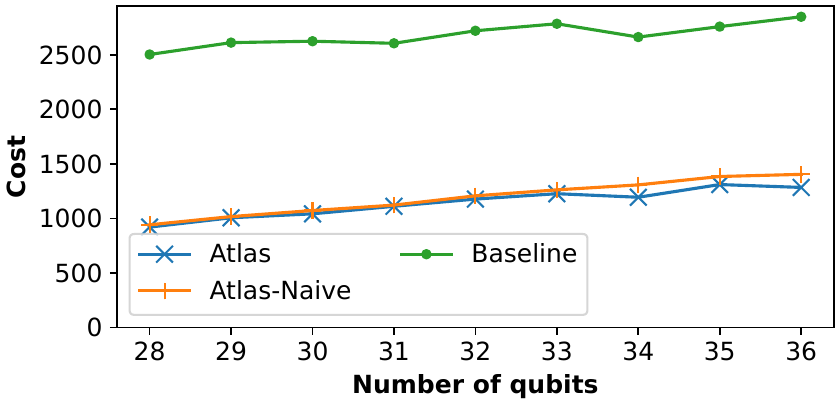}
\caption{The total execution cost of different kernelization algorithms on the circuit \tcd{vqc}.}
\label{fig:dp_vqc}
\end{figure}

\begin{figure}[H]
\centering
\includegraphics[width=0.97\linewidth]{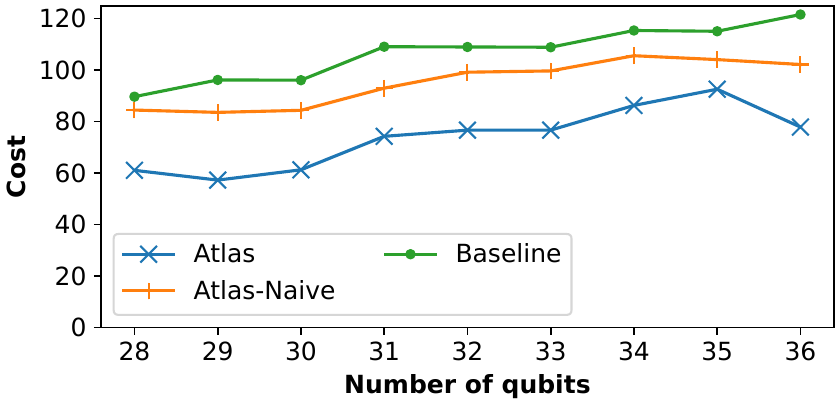}
\caption{The total execution cost of different kernelization algorithms on the circuit \tcd{wstate}.}
\label{fig:dp_wstate}
\end{figure}

\begin{figure}[H]
\centering
\includegraphics[width=0.97\linewidth]{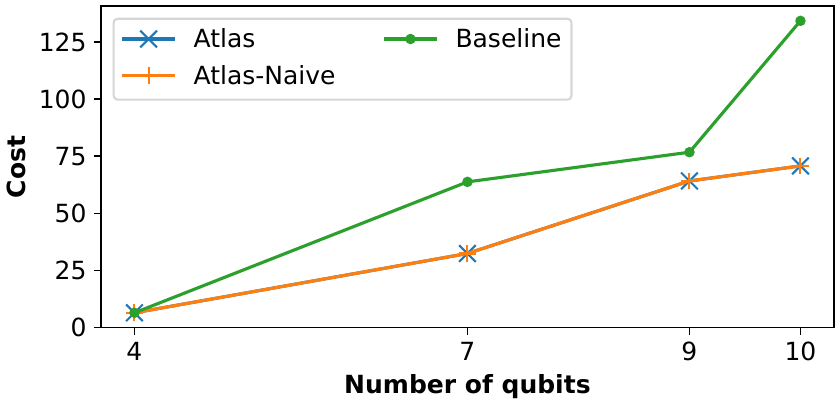}
\caption{The total execution cost of different kernelization algorithms on the circuit \tcd{hhl}.}
\label{fig:dp_hhl}
\end{figure}

\begin{figure}[H]
\centering
\includegraphics[width=0.97\linewidth]{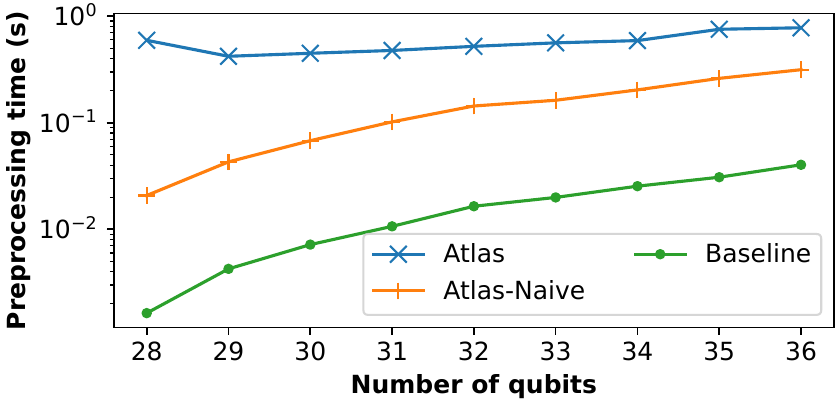}
\caption{The preprocessing time of different kernelization algorithms on the circuit \tcd{ae}.}
\label{fig:dp_time_ae}
\end{figure}

\begin{figure}[H]
\centering
\includegraphics[width=0.97\linewidth]{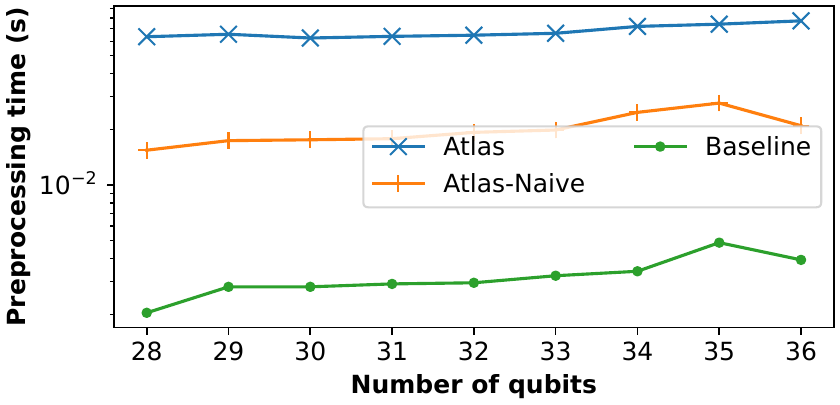}
\caption{The preprocessing time of different kernelization algorithms on the circuit \tcd{dj}.}
\label{fig:dp_time_dj}
\end{figure}

\begin{figure}[H]
\centering
\includegraphics[width=0.97\linewidth]{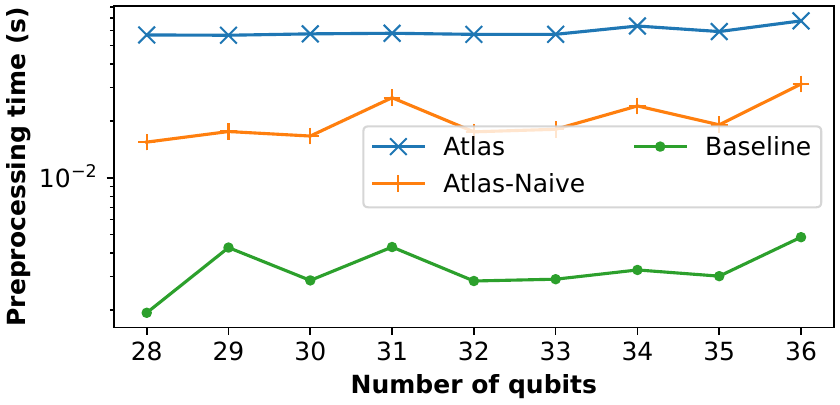}
\caption{The preprocessing time of different kernelization algorithms on the circuit \tcd{ghz}.}
\label{fig:dp_time_ghz}
\end{figure}

\begin{figure}[H]
\centering
\includegraphics[width=0.97\linewidth]{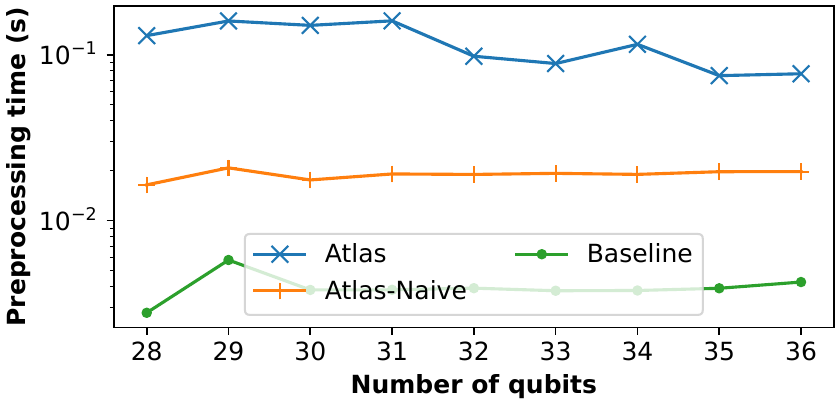}
\caption{The preprocessing time of different kernelization algorithms on the circuit \tcd{graphstate}.}
\label{fig:dp_time_graphstate}
\end{figure}

\begin{figure}[H]
\centering
\includegraphics[width=0.97\linewidth]{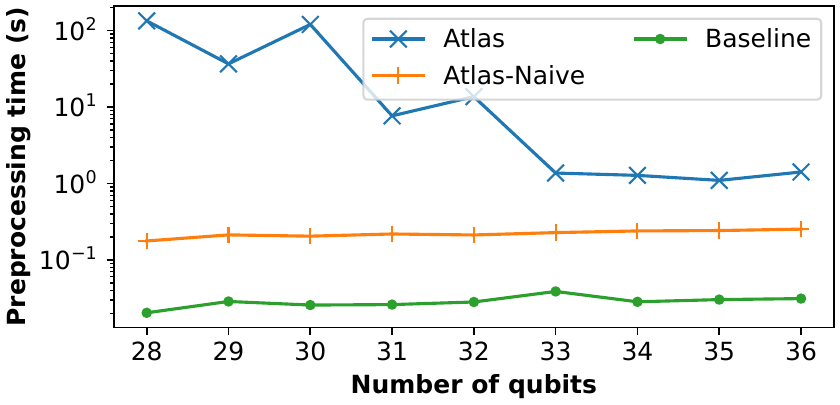}
\caption{The preprocessing time of different kernelization algorithms on the circuit \tcd{ising}.}
\label{fig:dp_time_ising}
\end{figure}

\begin{figure}[H]
\centering
\includegraphics[width=0.97\linewidth]{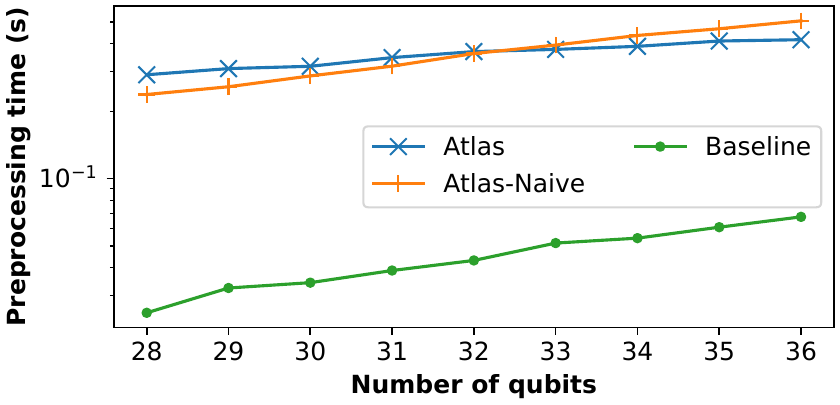}
\caption{The preprocessing time of different kernelization algorithms on the circuit \tcd{qft}.}
\label{fig:dp_time_qft}
\end{figure}

\begin{figure}[H]
\centering
\includegraphics[width=0.97\linewidth]{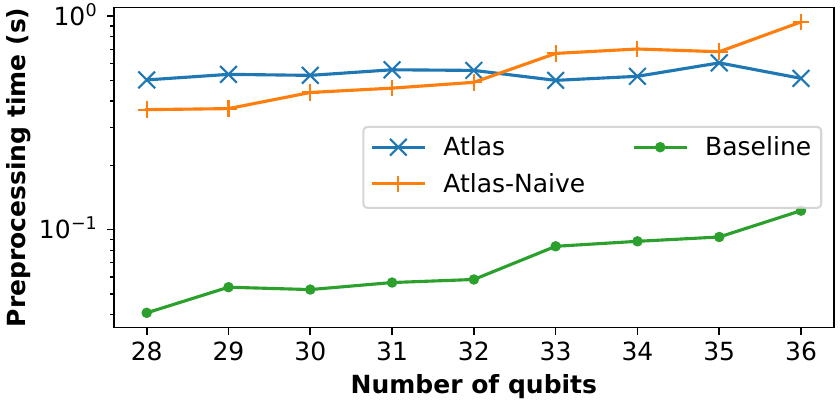}
\caption{The preprocessing time of different kernelization algorithms on the circuit \tcd{qpeexact}.}
\label{fig:dp_time_qpeexact}
\end{figure}

\begin{figure}[H]
\centering
\includegraphics[width=0.97\linewidth]{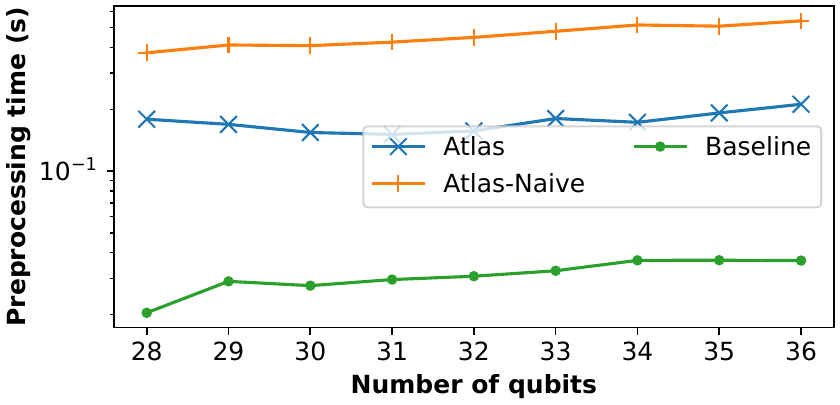}
\caption{The preprocessing time of different kernelization algorithms on the circuit \tcd{qsvm}.}
\label{fig:dp_time_qsvm}
\end{figure}

\begin{figure}[H]
\centering
\includegraphics[width=0.97\linewidth]{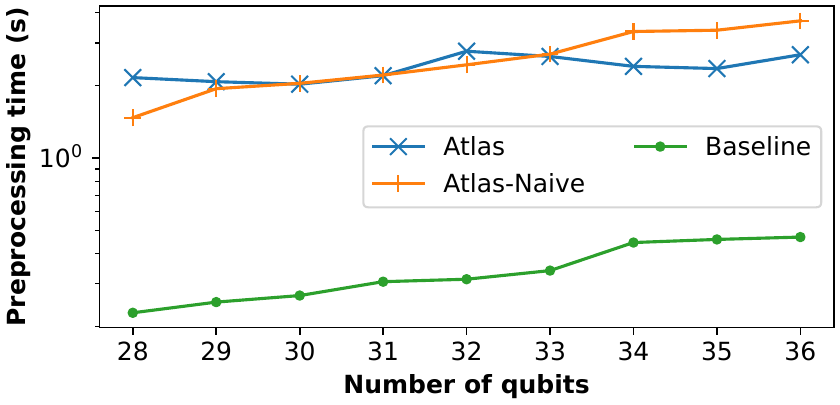}
\caption{The preprocessing time of different kernelization algorithms on the circuit \tcd{su2random}.}
\label{fig:dp_time_su2random}
\end{figure}

\begin{figure}[H]
\centering
\includegraphics[width=0.97\linewidth]{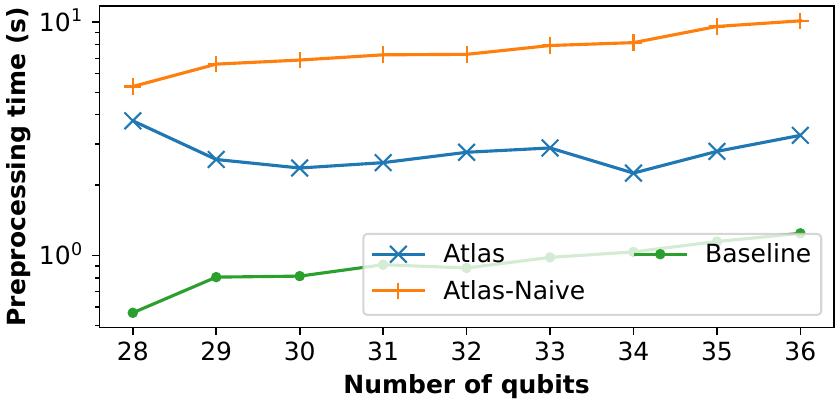}
\caption{The preprocessing time of different kernelization algorithms on the circuit \tcd{vqc}.}
\label{fig:dp_time_vqc}
\end{figure}

\begin{figure}[H]
\centering
\includegraphics[width=0.97\linewidth]{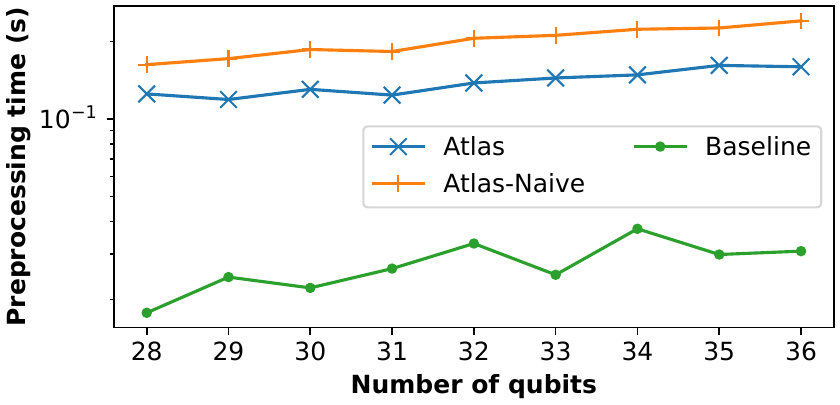}
\caption{The preprocessing time of different kernelization algorithms on the circuit \tcd{wstate}.}
\label{fig:dp_time_wstate}
\end{figure}

\begin{figure}[H]
\centering
\includegraphics[width=0.97\linewidth]{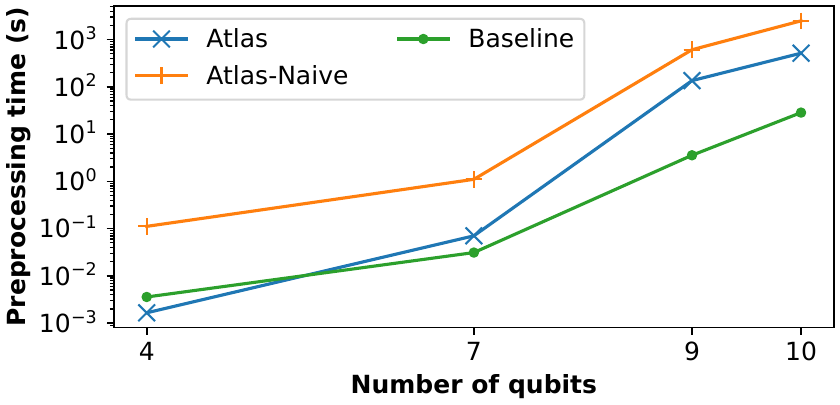}
\caption{The preprocessing time of different kernelization algorithms on the circuit \tcd{hhl}.}
\label{fig:dp_time_hhl}
\end{figure}

\fi

\end{document}